\newacronym{pwa}{PWA}{piece-wise affine}
\newacronym{lp}{LP}{Linear Programming}
\newacronym{nndm}{NNDM}{Neural Network Dynamical Model}
\newacronym{sbf}{SBF}{Stochastic Barrier Function}
\newacronym{mbr}{MBR}{Minimum Bounding Rectangle}
\newacronym{saa}{SAA}{Sample Average Approximation}
\newacronym{ltl}{LTL}{Linear Temporal Logic}
\newacronym{pac}{PAC}{Probably Approximately Correct}
\newacronym{iid}{i.i.d.}{independent and identically distributed}
\newacronym{sos}{SoS}{sum-of-squares}
\newacronym{lbp}{LBP}{Linear Bound Propagation}
\newtheorem{defi}{Definition}
\newtheorem{assump}{Assumption}
\newtheorem{lemma}{Lemma}
\newtheorem{theorem}{Theorem}
\newtheorem{corollary}{Corollary}
\newtheorem{remark}{Remark}
\newcommand{\uncertaintyspace}{\ensuremath{\Omega}}
\newcommand{\uncertaintyelement}{\ensuremath{\omega}}
\newcommand{\sigmaalgebra}{\ensuremath{\mathcal{F}}}
\newcommand{\probmeas}{\ensuremath{\mathbb{P}}}
\newcommand{\noise}{\ensuremath{\eta}}
\newcommand{\sampleset}{{\ensuremath D}}
\newcommand{\barrierl}[0]{{\ensuremath u}}
\newcommand{\barrierc}[0]{{\ensuremath v}}
\newcommand{\cmartingale}[0]{{\ensuremath c}}
\newcommand{\buffervar}[0]{{\ensuremath \xi}}
\newcommand{\region}[0]{{\ensuremath Q}}
\newcommand{\regionset}[0]{{\ensuremath \mathcal{Q}}}
\newcommand{\numberregions}{\ensuremath{\ell}}
\newcommand{\statespace}{{\ensuremath X}}
\newcommand{\safeset}[0]{{\ensuremath X_s}}
\newcommand{\unsafeset}[0]{{\ensuremath X_u}}
\newcommand{\initialset}[0]{{\ensuremath X_0}}
\newcommand{\timehorizon}[0]{{\ensuremath K}}
\DeclareMathOperator*{\argmin}{arg\,min}
\DeclareMathOperator*{\minimise}{\min}
\DeclareMathOperator*{\maximise}{\max}
\DeclareMathOperator*{\conv}{conv}
\DeclareMathOperator{\subjectto}{s.t.}
\newcommand{\dualvariable}{{\ensuremath \lambda}}
\newcommand{\vertices}{\ensuremath{\mathrm{vert}}}
\newcommand{\polyhedron}{\ensuremath{P}}
\DeclareMathOperator{\im}{im}
\newsavebox\myboxA
\newsavebox\myboxB
\newlength\mylenA
\newcommand*\xoverline[2][0.75]{%
    \sbox{\myboxA}{$\m@th#2$}%
    \setbox\myboxB\null
    \ht\myboxB=\ht\myboxA%
    \dp\myboxB=\dp\myboxA%
    \wd\myboxB=#1\wd\myboxA
    \sbox\myboxB{$\m@th\overline{\copy\myboxB}$}
    \setlength\mylenA{\the\wd\myboxA}
    \addtolength\mylenA{-\the\wd\myboxB}%
    \ifdim\wd\myboxB<\wd\myboxA%
       \rlap{\hskip 0.5\mylenA\usebox\myboxB}{\usebox\myboxA}%
    \else
        \hskip -0.5\mylenA\rlap{\usebox\myboxA}{\hskip 0.5\mylenA\usebox\myboxB}%
    \fi}
\begin{document}

\begin{frontmatter}
\title{A data-driven approach for safety quantification of non-linear stochastic systems with unknown additive noise distribution}

\author[dcsc]{Frederik Baymler Mathiesen}\ead{f.b.mathiesen@tudelft.nl},
\author[oxford]{Licio Romao},
\author[transportdelft]{Simeon C. Calvert},
\author[dcsc]{Luca Laurenti}, and
\author[oxford]{Alessandro Abate}

\address[dcsc]{Delft Center for Systems and Control, TU Delft.}
\address[oxford]{Department of Computer Science, University of Oxford.}
\address[transportdelft]{Department of Transport \& Planning, TU Delft.}

\thanks[footnoteinfo]{Frederik Baymler Mathiesen and Licio Romao contributed equally to this work. Corresponding author is Frederik Baymler Mathiesen.}

\date{}

\maketitle


\begin{abstract}
In this paper, we present a novel data-driven approach to quantify safety for non-linear, discrete-time stochastic systems with unknown noise distribution. We define safety as the probability that the system remains in a given region of the state space for a given time horizon and, to quantify it, we present an approach based on Stochastic Barrier Functions (SBFs). In particular, we introduce an inner approximation of the stochastic program to design a SBF in terms of a chance-constrained optimisation problem, which allows us to leverage the scenario approach theory to design a SBF from samples of the system with Probably Approximately Correct (PAC) guarantees. 
Our approach leads to tractable, robust linear programs, which enable us to assert safety for non-linear models that were otherwise deemed infeasible with existing methods. To further mitigate the computational complexity of our approach, we exploit the structure of the system dynamics and rely on spatial data structures to accelerate the construction and solution of the underlying optimisation problem. We show the efficacy and validity of our framework in several benchmarks, showing that our approach can obtain substantially tighter certificates compared to state-of-the-art with a confidence that is several orders of magnitude higher.
\end{abstract}

\end{frontmatter}

\section{Introduction}
\label{sec:introduction}

Safety-critical applications, such as autonomous driving \cite{Shalev-Shwartz2017} and robotics \cite{Livingston2012} require provable guarantees of safety, as undesirable behaviours may lead to catastrophic outcomes with long-term economic costs. As a consequence, asserting the safety of complex non-linear noisy systems has been the focus of many recent approaches \cite{PJP:07}, \cite{Aba:17}, \cite{Santoyo2021}, \cite{cosner2023robust}, \cite{Lavaei2021}, \cite{Huijgevoort2023b}. 
However, these approaches generally suffer from exponential complexity with respect to the dimension of the state space. Besides, generally, these approaches require that the distribution of the noise affecting the system is either known and Gaussian or of bounded support \cite{Santoyo2021}, \cite{PJP:07}. Unfortunately, in practice, the noise characteristics of the system are often not known \cite{Aba:17}, \cite{badings2023robust}, \cite{SALAMATI20217}. This leads to the main question of this paper: how can we compute formal certificates of safety for non-linear stochastic systems with unknown noise distribution?

Following the existing literature \cite{Santoyo2021}, \cite{steinhardt2012finite}, \cite{Abate2008}, we define safety as the probability that the system will remain within a given safe set for a given time horizon. 
Common approaches to quantify safety for non-linear stochastic systems either rely on formal abstraction methods \cite{Badings2022}, \cite{cauchi2019efficiency} or on \glspl{sbf} \cite{Santoyo2021}, \cite{steinhardt2012finite}, \cite{Mathiesen2013}. Abstraction-based methods build a discrete representation (i.e., a variant of a Markov chain) of the underlying stochastic system via the discretisation of its state space \cite{Badings2022}, \cite{cauchi2019efficiency}. Then, value iteration is performed on such representation to verify properties and synthesise controllers, which can be mapped back on the original system by relying on simulation relations between the discrete representation and the underlying system \cite{9144391}, \cite{GP:07}.
Various approaches have been proposed that combine abstractions and data-driven methods, including distributionally robust methods \cite{Gracia2022}, the scenario approach \cite{Badings2022}, \cite{badings2023robust}, Chernoff bounds \cite{Abolfazl2023}, and Gaussian process regression \cite{Jackson2020}.
However, a common drawback of these approaches is the need to partition (or discretise) the state space of the original stochastic system and to solve the value iteration on the resulting discrete abstraction, which leads to the well-known state-space explosion problem \cite{cauchi2019efficiency}.

In contrast, \glspl{sbf} \cite{steinhardt2012finite}, 
\cite{kushner1967stochastic}, \cite{mcallister2020stochastic} are Lyapunov-like functions, whose level sets allow one to bound the probability that a dynamical system will remain safe for a given time horizon, without the need to explicitly evolve the system over time \cite{Santoyo2021}, \cite{steinhardt2012finite}, \cite{Jagtap2020}. By not requiring an analytical solution to the system's governing equations over time, \glspl{sbf} represent a promising technique to efficiently quantify the safety of stochastic systems.
However, one of the main challenges in \gls{sbf} design, as we are interested in quantitative (rather than binary) safety evaluation, is in finding a \gls{sbf} that does not lead to overly conservative results \cite{laurenti2023unifying}, \cite{cosner2024bounding}. In fact, designing a \gls{sbf} requires the solution to a stochastic optimisation problem whose nature depends on the class of dynamics under study. 
In literature, synthesis of \glspl{sbf} is usually performed with convex optimisation, in particular \gls{sos} optimisation \cite{PJP:07}, \cite{Jagtap2020}, \cite{Santoyo2021} and \gls{lp} for piece-wise constant \glspl{sbf} \cite{mazouz2024piecewise}, or with deep learning \cite{Mathiesen2013}. However, these papers make restrictive assumptions on the noise distribution and on the system dynamics, e.g. Gaussian noise and linear or polynomial dynamics, are often enforced \cite{Gracia2022}, \cite{rahimian2019distributionally}. 
More recently, some papers relax the assumption on the noise distribution by means of data-driven approaches \cite{SALAMATI20217}, \cite{Salamati2023}, which in addition to the level of safety also induces a (formally quantified) confidence. In particular, \cite{SALAMATI20217} uses \gls{saa} to synthesise a \gls{sbf}. This alternative approach, however, suffers from a sample complexity that is linear in the inverse of the confidence. 
Recent work on synthesising \glspl{sbf} purely from trajectory data \cite{schon2024datadriven}, i.e., without assuming partial knowledge of the dynamics, has used Conditional Mean Embeddings, whose bottleneck is a computational complexity of $O(N^3)$ where $N$ is the number of samples. 

Our approach departs from previous techniques to ensure safety for non-linear stochastic dynamical systems using \glspl{sbf}. First, we present an inner approximation to the stochastic program commonly used to design \gls{sbf} in terms of a chance-constrained optimisation problem. The feasible set of the latter is contained in the feasible set of the former.  
Then, by restricting to the class of piece-wise affine stochastic barrier functions\footnote{It is known that with sufficiently many pieces a piece-wise affine function can formally approximate any continuous function arbitrarily well.} and relying on uncertain linear relaxations of non-linear systems \cite{kaidi2020automatic}, we show that the resulting chance-constrained problem can be reformulated into a robust \gls{lp} problem \cite{BV:04}.
This reformulation allows us to employ the scenario approach theory to devise a data-driven framework to synthesise a \gls{sbf}, and consequently obtain safety guarantees to the trajectories of the system. The resulting approach is data-efficient, as it only requires a limited amount of samples from the noise distribution that is logarithmic in the negative inverse of the confidence, which is in contrast with \cite{SALAMATI20217} where, as previously mentioned, the sample complexity is instead proportional in the inverse of the confidence. Furthermore, our approach is scalable due to its \gls{lp} representation. We also introduce an a priori sample discarding procedure and spatial indexing for efficient model construction to improve scalability of the proposed framework.
Our experiments show competitive performance on various systems, including a model of a vehicle in windy conditions and various \glspl{nndm} \cite{mazouzsafety2022}. Our numerical analysis illustrates how our approach substantially outperforms state-of-the-art comparable methods in terms of both the tightness of bounds and the amount of data required. Overall, our main contributions are:

\begin{itemize}
    \item We develop a data-driven technique for the design of barrier functions that relies on a chance-constrained inner approximation of  stochastic programs.
    \item We use the scenario approach theory and \acrlong{lbp} techniques to synthesize a \gls{sbf} for non-linear dynamical systems.
    \item We present an efficient computational architecture to construct the resulting optimisation problem using spatial indexing methods, such as R-trees, for faster set intersection computations.
    \item We benchmark the proposed framework, showing its advantages with respect to approaches in the literature, including instances of \acrfullpl{nndm}.
\end{itemize}

A conference version of this paper appeared in \cite{mathiesen2023inner}. This paper significantly expands the preliminary results in \cite{mathiesen2023inner}, which focused on developing data-driven \gls{sbf} synthesis techniques only for \gls{pwa} dynamics. In this paper, we deal with general non-linear dynamics, which necessitates non-trivial extensions of the techniques in \cite{mathiesen2023inner} to handle uncertain \gls{pwa} models. Furthermore, here we show how to leverage spatial indexing and sample discarding to enhance scalability of our framework. We also newly include detailed proofs, as well as a substantially extended empirical analysis, including experiments on \glspl{nndm}. 

The structure of this paper is as follows. The problem statement is described in Section \ref{sec:pro_formulation}. Section \ref{sec:prelim} covers preliminary results from convex analysis, scenario optimisation, and \acrlong{pwa} relaxations of non-linear functions.
Section \ref{sec:stochastic_barrier_functions} defines piece-wise linear \glspl{sbf} and how they can be used to certify safety for non-linear stochastic systems. Section \ref{sec:chance-constrained-apprx} details the main theoretical results, including a reformulation of the stochastic program to synthesise a \gls{sbf} in to an inner chance-constrained approximation.
In Section \ref{sec:data_driven_uncertain_stochastic_barriers}, we apply scenario approach theory towards a data-driven synthesis of a \gls{pwa} \gls{sbf}. 
In Section \ref{sec:computational_methods_and_efficiency} we show how to compute the necessary polyhedral over-approximations and two methods to improve the scalability, namely the convex hull over sample set and spatial indexing.
Empirical studies are reported in Section \ref{sec:experiments}.

\section{Problem formulation}
\label{sec:pro_formulation}

\subsection{Notation}
Let $\mathbb{R}$, $\mathbb{R}_{\geq 0}$, and $\mathbb{N}$ represent, respectively, the set of real, non-negative real, and natural numbers. We denote by $\{ x_1, \ldots, x_m\}$ the elements in a finite set. Let $\Omega$ be an abstract space, $\mathcal{F}$ be a $\sigma$-algebra defined on this set, and $\mathbb{P}$ be a probability measure; we denote by $(\Omega,\mathcal{F},\mathbb{P})$ the associated complete probability (or uncertainty) space. A random $\noise$ variable with values in $\mathbb{R}^n$ is a measurable function $\noise: \Omega \mapsto \mathbb{R}^n$, with $\mathbb{R}^n$ equipped with its standard Borel $\sigma$-algebra. A realisation of $\eta$ is denoted by $\eta(\omega)$, for some $\omega \in \Omega.$ 
For a given set $\Omega$,  we denote by $\mathcal{P}(\Omega)$ the set of probability measures defined over $\Omega$. 
The supremum norm $\lVert \cdot \rVert_{\infty}$ on a function $f : \mathbb{R}^n \mapsto \mathbb{R}$ is defined as $\lVert f \rVert_{\infty} = \sup_{x \in \mathbb{R}^n} \lvert f(x) \rvert$.

\subsection{Problem formulation}

We consider the following discrete-time, stochastic non-linear system
\begin{equation}
    \label{eq:system}
     {x}(k+1) = f( {x}(k)) + \noise, \quad x(0) = \bar{x},
\end{equation}
where $k \in \mathbb{N}$ denotes time, $\bar{x} \in \mathbb{R}^n$ is the initial condition, and $f : \mathbb{R}^n \mapsto \mathbb{R}^n$ is a Lipschitz continuous function. $\noise$ is a random variable
defined on the probability space $(\Omega, \mathcal{F},\mathbb{P})$ and, at each time instance, an independent realization is drawn and added to the nominal dynamics represented by the function $f$ (\gls{iid}).  We further assume that the probability distribution of $\noise$ is absolutely continuous with respect to the Lebesgue measure of $\mathbb{R}^n$ for a well-defined probability density function.
Throughout this paper, we assume that the probability distribution of $\eta$ is unknown.

Because of the \gls{iid} assumption on $\eta$, we can alternatively write the dynamics of System \eqref{eq:system} using its kernel representation. A \emph{stochastic kernel} is a measurable map from $\mathbb{R}^n$ onto the space of probability measures $\mathcal{P}(\mathbb{R}^n)$, $T: \mathbb{R}^n \mapsto \mathcal{P}(\mathbb{R}^n)$ \cite{bertsekas2004stochastic}. In particular, the stochastic kernel associated with System \eqref{eq:system} is given by
\begin{equation}
    T(X \mid x) = \int_{\Omega} \mathbf{1}_{X}(f(x) + \noise(\omega)) \mathbb{P}(d \omega),
    \label{eq:system-kernel-representation}
\end{equation}
where $X \subset \mathbb{R}^n$ is a Borel set of $\mathbb{R}^n$, and $ \mathbf{1}_{X}$ is the indicator function, i.e., $ \mathbf{1}_{X}(x)=1$ if $x\in X$, and $0$ otherwise. In other words, for a fixed $x\in\mathbb{R}^n$ representing the current state, the stochastic kernel associated with System \eqref{eq:system} returns the probability distribution of the state in the next time step.
For $K \in \mathbb{N}$ and $\bar{x} \in \mathbb{R}^n$, we denote by $(\mathbb{R}^n)^K = \mathbb{R}^n \times \ldots \times \mathbb{R}^n$ the $K$-fold Cartesian product of $\mathbb{R}^n$. Then, the stochastic kernel \eqref{eq:system-kernel-representation} induces a unique measure on $(\mathbb{R}^n)^K$ given by the unique extension (due to Kolmogorov's extension theorem \cite[Theorem 2.4.3]{tao2013introduction}) of the measure
{\small\vspace{-2.8em}\begin{equation*}
    \mathbb{P}^{\bar{x}}(X_1,\ldots,X_K) = \int_{X_1} \left(\prod_{k = 2}^K \int_{X_k} T(d\xi_k|\xi_{k-1}) \right)  T(d\xi_1\mid \bar{x}),
\end{equation*}}
which represents the probability of a trajectory starting at $\bar{x}$, under the dynamics of System \eqref{eq:system}, being in the set $X_k$, $k = 1, \ldots, K$ at the various time steps. 
We now have all the ingredients to define the notion of probabilistic safety that is crucial to our approach.

\begin{defi}[Probabilistic safety \cite{Abate2008}]\label{defi:weak_FI}
Let $K$ be a non-negative integer and $\safeset \subseteq \statespace$ be a compact set representing the safe set. Then, for a given initial state $\bar{x}$, we define probabilistic safety as
\begin{align}
    \zeta(\safeset, K \mid \bar{x}) &= \mathbb{P}^{\bar{x}}(X_s,\ldots,X_s)  \nonumber.
\end{align}
\end{defi}

Our main objective in this paper is to obtain a uniform lower bound on the probabilistic safety of System \eqref{eq:system} for all initial conditions in a given set $\initialset$.

\begin{prob}
\label{prob:MainProblem}
Let $D=(\omega_1,...,\omega_N)$ be \gls{iid} samples from $\eta.$ Then, for a given compact set $\initialset \subseteq \statespace$, a safe set $\safeset \subseteq \statespace$, and a time horizon $K \in \mathbb{N}$, find $\rho \in \left(0,1\right]$ such that, with high confidence,
\[ \zeta(\safeset, K) = \inf_{\bar{x} \in \initialset} \zeta(\safeset, K \mid \bar{x}) \geq \rho. \]
\end{prob}
Note that, as the noise is additive, the assumption of having \gls{iid} noise samples in Problem \ref{prob:MainProblem} is equivalent to having \gls{iid} full measurements of the state.
Exact computation of probabilistic safety for System \eqref{eq:system} is generally infeasible, even when $\noise$ is restricted to being a Gaussian random variable \cite{cauchi2019efficiency}. Consequently, it is obvious that the more general setting considered in this paper, where the noise distribution is arbitrary and possibly unknown, requires approximations. 
We should also stress that, while in Problem \ref{prob:MainProblem} we focus on safety, the techniques developed in this paper to obtain its solution can also be applied to perform verification of more complex temporal properties. In fact, safety is the dual to, and can be reformulated as, reachability\cite{Abate2008}\footnote{The safety probability is one minus the probability of reaching the unsafe set.}, and complex temporal specifications such as LTLf \cite{10.5555/2540128.2540252} can be reduced to reachability by representing the formula as an automaton and checking reachability to an accepting state of the product system between the automaton and the system \cite{Lavaei2021}, \cite{Jagtap2020}, \cite{9144391}. 

\paragraph*{Approach}
We leverage the availability of data $D$ to synthesise a \acrfull{sbf} that, with high confidence, bounds probabilistic safety. Our approach is summarised in Figure \ref{fig:relationship-between-programs} and is based on a novel reformulation of the optimisation problem to find a \gls{sbf} into a chance-constrained problem (detailed in Section \ref{sec:chance-constrained-apprx}). To deal with the non-linearity of a system, we develop in Section \ref{sec:uncertain_dynamics_safety_and_ccbp} a formal (i.e. with a quantified error) approximation of System \eqref{eq:system} as an uncertain \acrfull{pwa} system. Such an approximation is then used in Section \ref{sec:data_driven_uncertain_stochastic_barriers} to reformulate the chance-constrained problem into a robust linear program, which can be efficiently solved using duality. 

\begin{figure}
    \centering
    \includegraphics[width=\linewidth]{./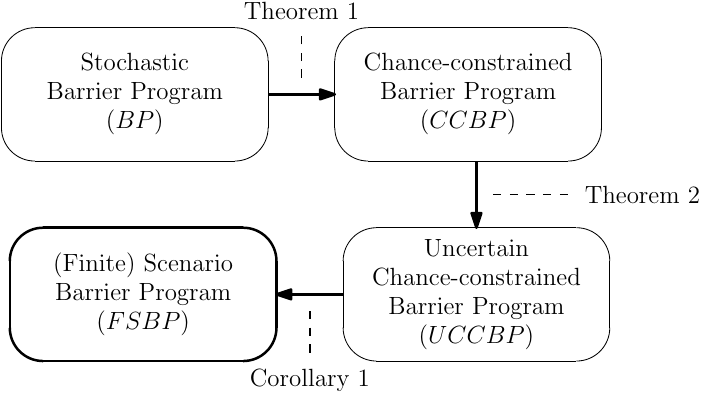}
    \caption{Summary of the proposed framework for safety verification of nonlinear systems. Theorem \ref{theorem:barrier-ccp} (Section \ref{sec:chance-constrained-apprx}) guarantees that the associated stochastic program to find a \gls{sbf} can be solved as a chance-constrained program via constraint tightening. Non-linearities can be dealt with by means of a \gls{pwa} abstraction as in Theorem \ref{theorem:uncertain_barrier-ccp} (Section \ref{sec:uncertain_dynamics_safety_and_ccbp}). The scenario approach is applied to use available data for high-confidence certificates; strong duality is essential to relax the problem to \acrlong{lp} in Corollary \ref{corollary:scenario-barrier-design} (in Section \ref{sec:data_driven_uncertain_stochastic_barriers}).}
    \label{fig:relationship-between-programs}
\end{figure}

\section{Preliminaries}
\label{sec:prelim}
Our approach requires some results from convex optimisation and scenario optimisation, which we summarise below for convenience. 

\subsection{Convex analysis and linear programming}
\label{subsec:prelim_LP}
A class of convex sets that will be widely used in this paper is the class of polyhedral sets \cite{legat2021polyhedra}. Given a matrix $H \in \mathbb{R}^{p \times n}$ and a vector $h \in \mathbb{R}^p$, a polyhedral set is denoted by
\begin{equation}
\polyhedron = \{ x \in \mathbb{R}^n: H x \leq h \}.
\label{eq:polyhedron_set}
\end{equation}
Representation \eqref{eq:polyhedron_set} is called the \emph{half-space representation} of polyhedral sets.
The vertex representation of $\polyhedron$ is $\polyhedron = \conv(x_1,\ldots,x_m)$ where $\vertices(\polyhedron)= \{x_1, \ldots, x_m\}$ are the vertices\footnote{Formally, an element $x$ of a convex set $C$ is called a vertex if whenever $x = \lambda x_1 + (1-\lambda) x_2$ for some $\lambda \in (0,1)$ and $x_1, x_2 \in C$, we have that $x_1 = x_2$.} of $\polyhedron$.

The following class of robust, or semi-infinite, \glspl{lp} is crucial to our developments: 
\begin{equation}
    \begin{aligned}
    \minimise_{z} & \quad c^\top z \\
    \subjectto & \quad (Az + a)^\top x \leq Bz + b, \quad \text{ for all } x \in \polyhedron,
    \end{aligned}
    \label{eq:robust_LP}
\end{equation}
where $z \in \mathbb{R}^d$ is the vector of decision variables of size $d \in \mathbb{N}$, $c \in \mathbb{R}^d$ is the objective cost, $A \in \mathbb{R}^{n \times d}$, $a \in \mathbb{R}^{n}$, $B \in \mathbb{R}^{1 \times d}$, $b \in \mathbb{R}$ are constraint coefficients, and $\polyhedron \subset \mathbb{R}^n$ is a polyhedral set. By relying on standard duality arguments, we can recast the semi-infinite optimisation problem \eqref{eq:robust_LP} as a regular LP as shown in the following proposition.

\begin{prop}[{{\cite[Exercise 5.17]{BV:04}}}]
    Consider the semi-infinite \gls{lp} problem \eqref{eq:robust_LP}, and denote by
     \[
    \mathcal{Z} = \{ z \in \mathbb{R}^d : (Az + a)^\top x \leq Bz + b, \text{ for all } x \in \polyhedron \},
    \]
    its feasible set. Define the optimisation problem
    \begin{align}
        \minimise_{z,\dualvariable} & \quad c^\top z \nonumber \\
        \subjectto & \quad h^\top \dualvariable \leq Bz + b \nonumber \\ 
        & \quad H^\top \dualvariable = Az + a, \quad \dualvariable \geq 0,
        \label{eq:dual_robust_LP_formulation}
    \end{align}
    whose feasible set is given by
    \[
    \mathcal{Z}' = \{ z : \exists \dualvariable \in \mathbb{R}^p_{\geq 0}, ~h^\top \dualvariable \leq Bz + b,~ H^\top \dualvariable = (Az + a) \}.
    \]
    Then, it holds that $\mathcal{Z} = \mathcal{Z}'$.
    \label{prop:main_result_robust_LP}
\end{prop}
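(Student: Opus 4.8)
The plan is to read the semi-infinite constraint in \eqref{eq:robust_LP} as a worst-case inequality and then dualise the inner maximisation, which for a fixed $z$ is an ordinary \gls{lp} in the variable $x$. Indeed, the constraint $(Az+a)^\top x \le Bz+b$ holding for all $x \in \polyhedron$ is equivalent to $p^\star(z) \le Bz + b$, where $p^\star(z) = \sup_{x \in \polyhedron} (Az+a)^\top x$ is the optimal value of $\max\{(Az+a)^\top x : Hx \le h\}$. The \gls{lp} dual of this inner problem is $\min\{ h^\top \dualvariable : H^\top \dualvariable = Az+a,~\dualvariable \ge 0 \}$, whose constraints, together with $h^\top \dualvariable \le Bz+b$, are exactly those appearing in the definition of $\mathcal{Z}'$. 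The proof then reduces to matching $\mathcal{Z}$ and $\mathcal{Z}'$ via weak and strong \gls{lp} duality.

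For the inclusion $\mathcal{Z}' \subseteq \mathcal{Z}$ I would argue directly, which amounts to weak duality: if $z$ admits $\dualvariable \ge 0$ with $H^\top \dualvariable = Az+a$ and $h^\top \dualvariable \le Bz+b$, then for every $x \in \polyhedron$
\[
(Az+a)^\top x = (H^\top\dualvariable)^\top x = \dualvariable^\top (Hx) \le \dualvariable^\top h \le Bz+b,
\]
where the first inequality uses $\dualvariable \ge 0$ together with $Hx \le h$; hence $z \in \mathcal{Z}$. This direction requires no assumption on $\polyhedron$. For the converse $\mathcal{Z} \subseteq \mathcal{Z}'$ I would invoke strong \gls{lp} duality: if $z \in \mathcal{Z}$ then $p^\star(z) \le Bz+b < \infty$, so the inner maximisation has finite optimal value, and therefore its dual is feasible, attains its optimum, and the optimal dual value equals $p^\star(z)$. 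This yields some $\dualvariable^\star \ge 0$ with $H^\top\dualvariable^\star = Az+a$ and $h^\top\dualvariable^\star = p^\star(z) \le Bz+b$, i.e.\ $z \in \mathcal{Z}'$.

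The step I expect to require the most care is the treatment of degenerate instances of the inner \gls{lp}. If $\polyhedron$ is unbounded the inner maximisation could be $+\infty$, in which case $z$ belongs to neither $\mathcal{Z}$ nor $\mathcal{Z}'$ (the dual is infeasible), so the two sets still agree; if $\polyhedron = \emptyset$ the robust constraint is vacuous for every $z$, and one must separately verify, via Farkas' lemma, that the dual system $\{H^\top\dualvariable = Az+a,~h^\top\dualvariable \le Bz+b,~\dualvariable \ge 0\}$ is then feasible for all $z$. In the setting of this paper $\polyhedron$ is always a nonempty compact polyhedron, so $p^\star(z)$ is finite for every $z$ and these boundary cases do not occur; I would therefore state nonemptiness (and, if convenient, boundedness) of $\polyhedron$ as a standing hypothesis and keep the argument to the two duality steps above.
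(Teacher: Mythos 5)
Your proposal is correct and follows essentially the same route as the paper: both read the robust constraint as $\sup_{x\in\polyhedron}(Az+a)^\top x\le Bz+b$, dualise this inner \gls{lp}, and establish the two inclusions via \gls{lp} duality. The only notable difference is that for $\mathcal{Z}'\subseteq\mathcal{Z}$ you use the direct weak-duality chain $(Az+a)^\top x=\dualvariable^\top Hx\le \dualvariable^\top h\le Bz+b$, whereas the paper invokes strong duality a second time through the optimal dual solution $\dualvariable_\star(\bar z)$; your version is slightly more elementary for that direction, and your explicit attention to nonemptiness and boundedness of $\polyhedron$ (which the paper's proof implicitly assumes when it takes a maximiser $\bar x$) is a reasonable addition consistent with the paper's setting.
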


\subsection{Scenario optimisation}
\label{subsec:prelim_scenario_theory}

The scenario approach theory allows one to certify the probability of constraint violation associated with chance-constrained optimisation problems \cite{CG:08}. Let $\eta: \Omega \mapsto \mathbb{R}^q$ be a random variable on the probability space $(\Omega,\mathcal{F},\mathbb{P})$. Let $\sampleset = (\omega_1, \ldots, \omega_N)$ be \gls{iid} samples from $\probmeas$, which live naturally in the space $(\uncertaintyspace^N, \otimes_N \sigmaalgebra, \probmeas^N)$, where $\uncertaintyspace^N$ is the $N$-fold Cartesian product of $\uncertaintyspace$, and $\otimes_N \sigmaalgebra$ is the product sigma algebra generated by the sigma algebra $\sigmaalgebra$, and $\probmeas^N$ represents the induced measure on $\uncertaintyspace^N$. Then, consider the scenario program
\begin{align}
    \minimise_{z} & \quad c^\top z \nonumber \\
    \subjectto & \quad g(z, \noise(\omega)) \leq 0, \quad \text{ for all } \omega \in \sampleset,
    \label{eq:scenario_program}
\end{align}
where $d$ is the dimension of the optimisation variables, $c \in \mathbb{R}^d$ is the objective cost, and $g(z,\eta): \mathbb{R}^d \times \mathbb{R}^q$ is a function that is convex in $z$ for each value of $\eta$ and measurable in $\eta$ for each value of $z$. Notice that the scenario program \eqref{eq:scenario_program}, by enforcing one convex constraint per sample in $\sampleset$, is convex program and consequently, it can be solved using convex optimization tools \cite{gearhart2013comparison}, \cite{Huangfu2017}.

\begin{assump}
We assume almost surely with respect to the measure $\mathbb{P}^N$ that:
    \begin{itemize}
        \item[a.] The feasible set $\mathcal{Z} = \{ z : g(z, \noise(\omega)) \leq 0, \forall \omega \in \sampleset \}$ has non-empty interior.
        \item[b.] The optimal solution of program \eqref{eq:scenario_program} exists and is unique.
    \end{itemize}
    \label{assump:well-posedness-scenario}
\end{assump}
Both conditions in Assumption \ref{assump:well-posedness-scenario} are standard. In fact, uniqueness can always be enforced with a tie-break rule.
Throughout this paper, we denote the unique solution of \eqref{eq:scenario_program} by $z^\star(\sampleset)$, which is a well-defined random variable on the space $\uncertaintyspace^N$. A key result within the scenario approach theory establishes an upper bound on the tail distribution of the constraint violation probability associated with $z^\star(\sampleset)$. 

\begin{prop}[\cite{CG:08}]
Consider the scenario program \eqref{eq:scenario_program} and suppose Assumption \ref{assump:well-posedness-scenario} holds. Then, for any $\epsilon \in (0,1)$, we have that
\[
\probmeas^N \{ \sampleset \in \uncertaintyspace^N: V(z^\star(\sampleset)) > \epsilon \} \leq \sum_{i = 0}^{d-1} \binom{N}{i} \epsilon^i (1-\epsilon)^{N-i}, 
\]
where 
$
    V(z) = \probmeas \{ \omega \in \uncertaintyspace : g(z,\noise(\omega)) > 0 \} 
$
is the violation probability of $z \in \mathbb{R}^d$.
\label{prop:scenario_approach_theory}
\end{prop}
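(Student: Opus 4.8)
The plan is to prove this classical scenario bound using the notion of \emph{support constraints} together with an exchangeability argument on the \gls{iid} sample $\sampleset$, along the lines of \cite{CG:08}; I expect the delicate point to be recovering the sharp constant with the $\epsilon^i$ factors, rather than a cruder $(1-\epsilon)^{N-d}$-type estimate. First I would call the constraint indexed by $\omega_i$ a support constraint of the scenario program \eqref{eq:scenario_program} if deleting it strictly decreases the optimal value $c^\top z^\star(\sampleset)$. Using that $g(\cdot,\noise(\omega))$ is convex in $z$, that $z\in\mathbb{R}^d$, and Assumption \ref{assump:well-posedness-scenario} (nonempty interior, unique minimiser), a Helly-type argument shows that almost surely there are at most $d$ support constraints and that $z^\star(\sampleset)$ coincides with the minimiser $z^\star_I$ of the subprogram that keeps only the constraints indexed by the minimal support set $I$, with $|I|\le d$. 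Since only the inequality is needed, it suffices to treat the non-degenerate, fully supported case in which $|I|=d$ almost surely and the active constraints are in general position; a regularisation/limiting argument shows this case dominates, and I would establish that reduction first.

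Next I would decompose the event of interest according to its support set,
\[
\{\,V(z^\star(\sampleset))>\epsilon\,\}=\bigsqcup_{|I|=d}\ \{\,I\text{ is the support set}\,\}\cap\{\,V(z^\star_I)>\epsilon\,\},
\]
and observe that $\{I\text{ is the support set}\}$ factors into (i) an event $\Theta_I$, measurable with respect to $\{\omega_i\}_{i\in I}$ alone, asserting that each of the $d$ constraints of the subprogram on $I$ is a support constraint of that subprogram, and (ii) the event that $z^\star_I$ satisfies every one of the remaining $N-d$ constraints. Conditioning on $\{\omega_i\}_{i\in I}$ renders $z^\star_I$, $\Theta_I$, and $V(z^\star_I)$ deterministic, while the other $N-d$ samples are still \gls{iid}, so the conditional probability of (ii) equals $(1-V(z^\star_I))^{N-d}$. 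Since the $\binom{N}{d}$ terms are exchangeable, this yields
\[
\probmeas^N\{V(z^\star(\sampleset))>\epsilon\}=\binom{N}{d}\,\Exp{\mathbf{1}[\Theta_{I_0}]\,\mathbf{1}[V(z^\star_{I_0})>\epsilon]\,(1-V(z^\star_{I_0}))^{N-d}},
\]
with $I_0=\{1,\dots,d\}$.

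The remaining, and genuinely subtle, step---the main obstacle---is to show the right-hand side equals $\sum_{i=0}^{d-1}\binom{N}{i}\epsilon^i(1-\epsilon)^{N-i}=\probmeas\{\mathrm{Bin}(N,\epsilon)\le d-1\}$; a plain union bound over support sets only gives $\binom{N}{d}(1-\epsilon)^{N-d}$, which is not the claimed constant. Recovering the $\epsilon^i$ factors requires identifying the law of $W:=V(z^\star_{I_0})$ restricted to $\Theta_{I_0}$. In the non-degenerate, fully supported case (and using that $\noise$ has a density, so the relevant distributions are continuous) this has the structure of uniform order statistics---mapping each sample to its quantile position turns the $d$ active constraints into the top $d$ of $N$ order statistics, so that $W\sim\mathrm{Beta}(d,1)$ and $V(z^\star(\sampleset))\sim\mathrm{Beta}(d,N-d+1)$---and the upper tail of $\mathrm{Beta}(d,N-d+1)$ at $\epsilon$ is exactly $\sum_{i=0}^{d-1}\binom{N}{i}\epsilon^i(1-\epsilon)^{N-i}$. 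I would make this correspondence rigorous, or, equivalently, prove the identity by induction on $N$ (base case $N=d$) from the combinatorial recursion satisfied by both sides. Finally I would discharge the earlier reduction with a standard perturbation argument, upgrading the equality obtained in the fully supported case to the stated inequality in general, the degenerate configurations carrying zero probability under Assumption \ref{assump:well-posedness-scenario}.
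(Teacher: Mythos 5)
The paper states Proposition \ref{prop:scenario_approach_theory} without proof, citing \cite{CG:08}, so the only meaningful comparison is with the canonical Campi--Garatti argument; your sketch does reproduce its architecture faithfully: support constraints, the Helly/Levi bound of $d$ on their number, the decomposition over candidate support sets with the conditional factor $(1-V(z^\star_I))^{N-d}$, and the identification of the binomial tail with the upper tail of a $\mathrm{Beta}(d,N-d+1)$ law in the fully-supported case. The beta--binomial identity and the induction on $N$ you propose are indeed how the equality is established for fully-supported programs.

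The genuine gap is in your final reduction. You assert that the degenerate configurations ``carry zero probability under Assumption \ref{assump:well-posedness-scenario}'' and that a ``standard perturbation argument'' upgrades the fully-supported equality to the general inequality. Neither holds as stated: Assumption \ref{assump:well-posedness-scenario} only guarantees a feasible set with nonempty interior and uniqueness of the minimiser; it does not rule out problems that are not fully supported, i.e.\ whose number of support constraints is strictly less than $d$ with positive probability. Such problems are entirely generic (for instance when one scenario constraint dominates others with positive probability, or when the optimum is pinned down by fewer than $d$ constraints), and handling them is precisely the technical heart of \cite{CG:08}: one must show that every admissible problem is \emph{dominated} by a fully-supported one, which is done via a constraint-heating/ball-regularisation construction and a careful limit, not a routine perturbation --- and this is exactly why the general statement is an inequality rather than the equality that holds in the fully-supported class (as the paper itself remarks after the proposition). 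A smaller imprecision: your claim that $W=V(z^\star_{I_0})$ restricted to $\Theta_{I_0}$ is $\mathrm{Beta}(d,1)$ ``by order statistics'' is only a heuristic; the rigorous route is to note that exchangeability applied to the subprogram on $m\ge d$ samples gives $\Exp{\mathbf{1}[\Theta_{I_0}](1-W)^{m-d}}=1/\binom{m}{d}$ for every such $m$, and then to invoke uniqueness in the Hausdorff moment problem (or, equivalently, to run the recursion in $N$ that you mention as an alternative).
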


The inequality in Proposition \ref{prop:scenario_approach_theory} holds with equality for the class of fully-supported scenario programs -- the reader is referred to \cite{CG:08} for more details. 

\subsection{Uncertain piece-wise affine relaxations}
\label{subsec:prelim_lbp}

Uncertain \gls{pwa} relaxations, as depicted in Fig. \ref{fig:auto_lirpa}, are key to our method. This type of relaxation allows one to treat complex non-linear functions as uncertain \gls{pwa} functions, simplifying analysis and optimisation.
An uncertain \gls{pwa} relaxation for a function $f$ is a collection of local linear relaxations, that is, $\underline{A_i} x + \underline{b_i} \leq f(x) \leq \xoverline{A_i} x + \xoverline{b_i}$ for all $x$ in a convex region $\region_i$, given a partition $\regionset = \{ \region_1, \ldots, \region_\numberregions \}$.
We call a partition convex if each region in the partition is convex.
The union of all regions $\bigcup_{i = 1}^\numberregions \region_i$ is the domain of the relaxation. 
We note that any locally Lipschitz function $f$ can be relaxed to an uncertain \gls{pwa} function \cite{CLARKE198152}. The idea is to partition the input domain into a number of regions and compute linear relaxations independently for each region. We formalise the existence of an uncertain \gls{pwa} relaxation in what follows.
\begin{prop}\label{prop:continuous_function_as_uncertain_pwa}
    Let $\regionset = \{ \region_1, \ldots, \region_\numberregions \}$ be a given convex partition of a compact set $\statespace \subset \mathbb{R}^n$. Then, for any function $f : \mathbb{R}^n \to \mathbb{R}^n$ that is locally Lipschitz on $X$, there exists an uncertain \gls{pwa} function \[\hat{f}(x, \alpha) = \hat{f}_i(x, \alpha) = A_i(\alpha)x + b_i(\alpha), \quad \text{ for } x \in \region_i \subseteq X\] where $\alpha \in [0, 1]$ and
    \[
        A_i(\alpha) = \alpha \underline{A_i} + (1 - \alpha) \xoverline{A_i}, \quad b_i(\alpha) = \alpha \underline{b_i} + (1 - \alpha) \xoverline{b_i}
    \]
    with matrices $\underline{A_i}, \xoverline{A_i}\in \mathbb{R}^{n \times n} $ and vectors $\underline{b_i}, \xoverline{b_i} \in \mathbb{R}^n$ given for all $\region_i \in \regionset$, such that it holds that $f(x) \in \{ \hat{f}(x, \alpha)  : \alpha \in [0, 1] \}$ for all $x \in \statespace$.
\end{prop}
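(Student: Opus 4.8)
The plan is to build the relaxation region by region and then patch the local bounds together, using only two structural facts: that $f$ is continuous on $\statespace$ (which is implied by local Lipschitzness), and that $\statespace$, hence the closure of each region $\region_i$, is compact. I would first record the elementary fact that a locally Lipschitz function on a compact set is in fact globally Lipschitz — and a fortiori bounded — on it: cover $\statespace$ by finitely many balls on which $f$ is Lipschitz, take the maximum of the finitely many Lipschitz constants, and dominate the remaining "far-apart" pairs of points by $2\sup_{\statespace}\lVert f\rVert / r$, which is finite because $f$ is continuous on the compact set $\statespace$. In particular $f$ is Lipschitz, continuous, and bounded on each $\region_i\subseteq\statespace$.

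Next, for a fixed region $\region_i$ I would construct a pair of affine functions giving componentwise lower and upper envelopes, $\underline{A_i}x+\underline{b_i}\le f(x)\le\xoverline{A_i}x+\xoverline{b_i}$ for all $x\in\region_i$. Existence is immediate at the crudest level: each scalar component $f_j$ is continuous on the compact set $\overline{\region_i}$, hence attains a finite minimum and maximum there, and taking $\underline{A_i}=\xoverline{A_i}=0$ with the entries of $\underline{b_i},\xoverline{b_i}$ equal to those extrema yields valid (degenerate, constant) affine bounds; in practice one instead obtains non-degenerate and tighter bounds via linear bound propagation or, in the nonsmooth case, via Clarke's generalized-Jacobian mean-value inequality \cite{CLARKE198152}, but for the existence claim the constant bounds already suffice. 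I would then define $A_i(\alpha),b_i(\alpha)$ exactly as in the statement.

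Finally I would close the argument. Given $x\in\statespace$, since $\regionset$ is a partition of $\statespace$ there is an index $i\le\numberregions$ with $x\in\region_i$, so each coordinate of $f(x)$ lies between the corresponding coordinates of $\underline{A_i}x+\underline{b_i}$ and $\xoverline{A_i}x+\xoverline{b_i}$; equivalently, for every coordinate $j$ there is $\alpha_j\in[0,1]$ with $f_j(x)=\alpha_j(\underline{A_i}x+\underline{b_i})_j+(1-\alpha_j)(\xoverline{A_i}x+\xoverline{b_i})_j$. Since $\hat f_i(x,\cdot)$ sweeps out precisely the set delimited by the two affine envelopes as $\alpha$ ranges over $[0,1]$, this gives $f(x)\in\{\hat f(x,\alpha):\alpha\in[0,1]\}$, which is the assertion.

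I expect the only genuinely delicate point to be the second step, and only in its practically meaningful form: because a locally Lipschitz $f$ need not be differentiable, the affine over- and under-estimators on a convex piece cannot be read off a gradient, and certifying their validity requires nonsmooth calculus (Clarke's generalized Jacobian together with the associated mean-value inequality on convex sets). Everything else — promoting local to global Lipschitzness via compactness, and reading off the convex-combination parametrisation — is routine.
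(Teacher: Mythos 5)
Your construction of the componentwise affine envelopes is sound and, for the bare existence claim, genuinely more elementary than the paper's: where the paper invokes Clarke's mean value theorem for generalized gradients to produce the slopes $\underline{A_i},\xoverline{A_i}$, you observe that continuity of $f$ on the compact closure of each $\region_i$ already yields valid degenerate envelopes with $\underline{A_i}=\xoverline{A_i}=0$ and $\underline{b_i},\xoverline{b_i}$ the componentwise extrema. (The preliminary local-to-global Lipschitz argument is correct but not actually needed for this.)

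The gap is in your final step. For fixed $x$, the set $\{\hat f_i(x,\alpha):\alpha\in[0,1]\}$ with a \emph{scalar} $\alpha$ is the line segment joining $\underline{A_i}x+\underline{b_i}$ to $\xoverline{A_i}x+\xoverline{b_i}$, not ``the set delimited by the two affine envelopes,'' which is the axis-aligned box $\prod_j[(\underline{A_i}x+\underline{b_i})_j,(\xoverline{A_i}x+\xoverline{b_i})_j]$. The componentwise sandwich gives you per-coordinate parameters $\alpha_j$, and these need not coincide, so membership in the box does not imply membership in the segment; indeed $f(x)=\hat f_i(x,\alpha)$ forces the residual $f(x)-\xoverline{A_i}x-\xoverline{b_i}$ to be a scalar multiple of the gap vector $(\underline{A_i}-\xoverline{A_i})x+(\underline{b_i}-\xoverline{b_i})$. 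With your constant bounds the conclusion is concretely false for $n\ge 2$: for $f(x)=(x_1,-x_1)$ on $[0,1]^2$ the segment is $\{(1-\alpha,-\alpha):\alpha\in[0,1]\}$, which contains $f(x)$ only when $x_1=1/2$. In fairness, the paper's own proof makes the same jump from the componentwise sandwich to the $\alpha$-parametrised containment, and the statement is evidently intended with the box semantics (i.e.\ $\alpha$ acting per coordinate, consistent with how the relaxation is defined in Section 3.3 and used throughout); under that reading both your argument and the paper's close immediately. But as written, the sentence ``sweeps out precisely the set delimited by the two affine envelopes'' is the false step, and it must be repaired either by making the componentwise interpretation of $\alpha$ explicit or by constructing envelopes for which the residuals are collinear with the gap vector --- which your degenerate constant bounds do not arrange.
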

\begin{figure}
    \centering
    \includegraphics[width=\linewidth]{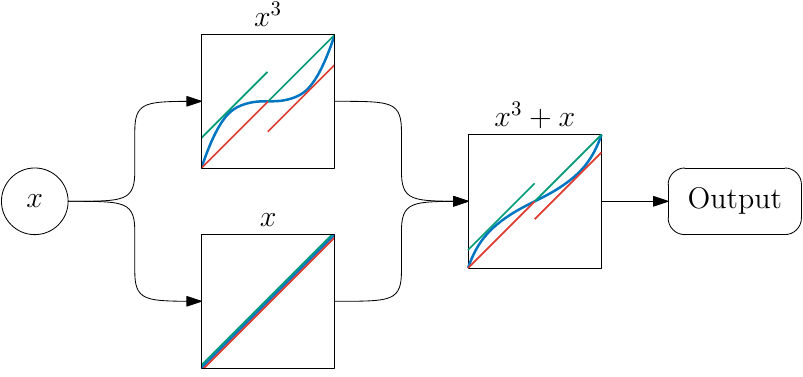}
    \caption{Computation graph for the function $f(x) = x^3 + x$ with \acrfull{lbp} annotation for the input regions $[-1, 0]$ and $[0, 1]$. \Acrshort{lbp} operates by propagating backward linear bounds on the computation graph.} 
    \label{fig:auto_lirpa}
\end{figure}
A proof for the proposition can be found in Appendix \ref{subsec:appendix_proof_existence_linear_bounds}. Note that there are various possible approaches to select 
matrices $\underline{A_i}, \xoverline{A_i} $ and vectors $\underline{b_i}, \xoverline{b_i}$ for each region $\region_i$. In this paper, we use a state-of-the-art technique, called \acrfull{lbp}\glsunset{lbp} \cite{kaidi2020automatic}.
The core idea of \gls{lbp} is to recursively propagate linear bounds backward through the computation graph representing a function.
An example of this bound propagation procedure can be seen in Fig. \ref{fig:auto_lirpa} where the cubic term is linearly bounded based on the input interval bounds and composed with the linear term.

\section{Stochastic barrier functions}\label{sec:stochastic_barrier_functions}

In this paper, we use \acrlongpl{sbf} \cite{PJP:07} to provide safety guarantees for System \eqref{eq:system}.

\begin{defi}[\acrlong{sbf}]\label{defi:Barrier_certificate}
Given a safe set $\safeset$ and a set of initial conditions $\bar{X}$, a measurable function $B: \mathbb{R}^n \mapsto \mathbb{R}_{\geq 0}$ is called a \acrfull{sbf} for System\eqref{eq:system} if there exist non-negative constants $\gamma, \cmartingale$ satisfying
\begin{align}
    B(x) \leq \gamma, &~ \text{ for all } x \in \initialset,\label{eq:initial_set_constraint}\\
    \label{eq:unsafe-set-cond}
    B(x) \geq 1, & ~ \text{ for all }x \in \mathbb{R}^n \setminus \safeset,\\
    \label{eq:c-martingale}
     \mathbb{E} \left\{ B(f(x) + \noise(\uncertaintyelement)) \right\}& 
      \leq B(x) + \cmartingale, ~ \text{ for all } x \in \safeset.
\end{align}
\end{defi}

\begin{figure}
    \centering
    \includegraphics[width=\linewidth]{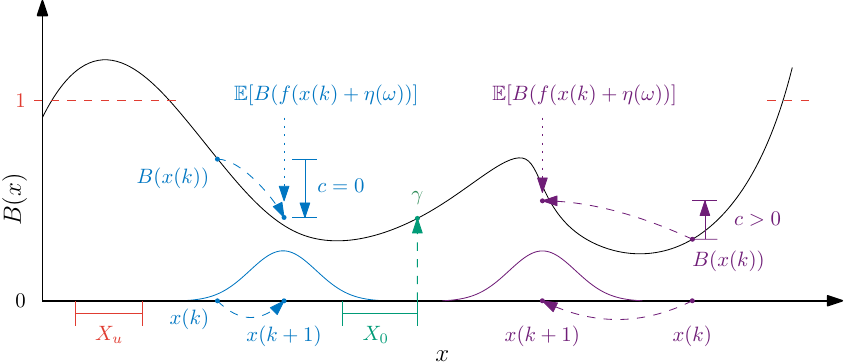}
    \caption{The figure is borrowed from \cite{Mathiesen2013}. A \gls{sbf} $B(x)$ is a non-negative function that is greater than $1$ in an unsafe region $\unsafeset$, which is the complement of the safe set $\safeset$. The variable $\gamma$ is an upper bound for $B(x)$ over an initial region $\initialset$. The upper bound for the expected increase in $B(x)$ after one step of \eqref{eq:system} over the safe set $\safeset$ is denoted $\cmartingale$. Proposition \ref{prop:barrier_prob_safety} shows that $\zeta(\safeset, T) \geq 1 - (\gamma + \cmartingale T)$.}
    \label{fig:example_continuous_barrier_function}
\end{figure}

A pictorial representation of a \gls{sbf} is presented in Figure \ref{fig:example_continuous_barrier_function}. Inequality \eqref{eq:c-martingale} requires that for all $x\in \safeset$, the expected value of the barrier function at the next step cannot increase more than $c$. As shown in Proposition \ref{prop:barrier_prob_safety} below, we can leverage results from martingale theory to obtain a lower bound on $\zeta(\safeset, \timehorizon)$.  

\begin{prop}[{\cite[Chapter 3, Theorem 3]{kushner1967stochastic}, \cite[Section 2.2]{steinhardt2012finite}}]
For a safe $\safeset \subset \mathbb{R}^n$ and a set of initial conditions $\initialset \subset \safeset$, let the function $B: \mathbb{R}^n \mapsto \mathbb{R}_{\geq 0}$ be a \gls{sbf} for System \eqref{eq:system}, and let the positive integer $\timehorizon$ be a time horizon. Then, it holds that $\zeta(\safeset, \timehorizon) \geq 1 - (\gamma + \cmartingale \timehorizon)$.
\label{prop:barrier_prob_safety}
\end{prop}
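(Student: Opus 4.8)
The plan is to use the supermartingale property encoded by inequality \eqref{eq:c-martingale} together with a standard stopping-time / Ville-type inequality for non-negative supermartingales. First I would set up the stochastic process $Y_k = B(x(k))$ along trajectories of System \eqref{eq:system} starting at $\bar{x} \in \initialset$, and define the stopping time $\tau = \inf\{ k \geq 0 : x(k) \notin \safeset \}$, i.e., the first exit time from the safe set. The key observation is that the ``shifted and stopped'' process $Z_k = B(x(k \wedge \tau)) + \cmartingale \cdot (k \wedge \tau)$ is a non-negative supermartingale with respect to the natural filtration: on the event $\{\tau > k\}$ we have $x(k) \in \safeset$, so \eqref{eq:c-martingale} gives $\mathbb{E}[B(x(k+1)) \mid \mathcal{F}_k] \leq B(x(k)) + \cmartingale$, which translates into $\mathbb{E}[Z_{k+1} \mid \mathcal{F}_k] \leq Z_k$; on $\{\tau \leq k\}$ the process is constant and the inequality is trivial. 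Non-negativity of $Z_k$ follows from $B \geq 0$ and $\cmartingale \geq 0$.

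Next I would apply the supermartingale maximal inequality (Ville's inequality): for a non-negative supermartingale $Z_k$ and any level $\lambda > 0$, $\mathbb{P}\{ \sup_{0 \leq k \leq \timehorizon} Z_k \geq \lambda \} \leq \mathbb{E}[Z_0] / \lambda$. Here $Z_0 = B(\bar{x}) \leq \gamma$ by \eqref{eq:initial_set_constraint}. Taking $\lambda = 1$, the event that the trajectory exits $\safeset$ at some time $k \leq \timehorizon$ implies (at the exit time, which is $\leq \timehorizon$) that $B(x(\tau)) \geq 1$ by \eqref{eq:unsafe-set-cond}, hence $Z_\tau = B(x(\tau)) + \cmartingale \tau \geq 1$, so $\sup_{k \leq \timehorizon} Z_k \geq 1$. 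Therefore the probability of exiting $\safeset$ within horizon $\timehorizon$ is at most $\mathbb{E}[Z_0] + \cmartingale \timehorizon \leq \gamma + \cmartingale \timehorizon$ — wait, more carefully: $\mathbb{E}[Z_\timehorizon] \leq \mathbb{E}[Z_0] \leq \gamma$, but $Z_\timehorizon$ includes the $\cmartingale (k\wedge\tau)$ term, so I should instead bound $\mathbb{P}\{\tau \leq \timehorizon\} = \mathbb{P}\{B(x(\tau\wedge\timehorizon)) \geq 1\} \leq \mathbb{E}[B(x(\tau\wedge\timehorizon))] \leq \mathbb{E}[B(x(\tau\wedge\timehorizon)) + \cmartingale(\tau\wedge\timehorizon)] = \mathbb{E}[Z_\timehorizon] \leq \gamma + \cmartingale\timehorizon$ — here the crucial step is that $\mathbb{E}[Z_0] = B(\bar x) + 0 \leq \gamma$, and supermartingale gives $\mathbb{E}[Z_\timehorizon] \leq \mathbb{E}[Z_0]$, but $Z_\timehorizon = B(x(\timehorizon\wedge\tau)) + \cmartingale(\timehorizon\wedge\tau)$, so $\mathbb{E}[B(x(\timehorizon\wedge\tau))] \leq \gamma - \cmartingale\mathbb{E}[\timehorizon\wedge\tau] \leq \gamma$. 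Thus $\mathbb{P}\{\tau\leq\timehorizon\} \leq \gamma$; to get the $\cmartingale\timehorizon$ term I instead use the Markov inequality on $B(x(\timehorizon\wedge\tau))$ directly without the additive term being free — the standard argument is $\mathbb{P}\{\tau\leq\timehorizon\} \leq \mathbb{E}[B(x(\timehorizon\wedge\tau)) + \cmartingale(\timehorizon\wedge\tau)] \leq \mathbb{E}[Z_0] + \cmartingale\timehorizon$ is wrong since $Z_0$ already has the term; rather, decompose: on $\{\tau\leq\timehorizon\}$, $B(x(\timehorizon\wedge\tau)) + \cmartingale\timehorizon \geq B(x(\tau)) + \cmartingale\tau = Z_\tau \geq 1$ is not what Markov needs either.

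Let me restate the clean argument I will actually write: apply the supermartingale inequality to $Z_k = B(x(k\wedge\tau)) + \cmartingale(k\wedge\tau)$ at level $1$ over $k \in \{0,\dots,\timehorizon\}$. We have $\{\tau \leq \timehorizon\} \subseteq \{\sup_{k\leq\timehorizon} Z_k \geq 1\}$ because at $k = \tau \leq \timehorizon$, $B(x(\tau)) \geq 1$. By Ville's inequality (or Doob's maximal inequality for supermartingales), $\mathbb{P}\{\sup_{k\leq\timehorizon} Z_k \geq 1\} \leq \mathbb{E}[Z_0] = B(\bar x) \leq \gamma$. This gives the cleaner bound $\zeta(\safeset,\timehorizon) \geq 1 - \gamma$, which is stronger. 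To recover the stated $1 - (\gamma + \cmartingale\timehorizon)$ bound (the one the paper quotes, following \cite{steinhardt2012finite}), I would instead avoid the stopping time and directly iterate \eqref{eq:c-martingale}: let $p_k = \mathbb{P}^{\bar x}\{x(0),\dots,x(k) \in \safeset\}$ and show by induction, using \eqref{eq:c-martingale} and \eqref{eq:unsafe-set-cond}, that $\mathbb{E}[B(x(k))\mathbf{1}_{x(0),\dots,x(k-1)\in\safeset}] \leq \gamma + \cmartingale k$, hence $\mathbb{P}\{x(j)\notin\safeset \text{ for some } j\leq\timehorizon\} \leq \gamma + \cmartingale\timehorizon$ by Markov. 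I expect the main obstacle to be purely bookkeeping: carefully handling the conditioning on the ``has stayed safe so far'' event and making sure the one-step inequality \eqref{eq:c-martingale}, which only holds for $x \in \safeset$, is applied only on that event — together with a measurability check that $B(f(x)+\noise)$ is integrable (which follows since $B$ is measurable and the supermartingale argument only needs the conditional expectation to be well-defined, or one can truncate). No deep technology is needed beyond Doob's / Ville's inequality or a one-line induction.
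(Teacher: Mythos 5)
The paper offers no proof of this proposition --- it is imported verbatim from Kushner and from Steinhardt--Tedrake --- so your attempt can only be judged against the classical argument, and there it has a genuine error at its core. The process you repeatedly lean on, $Z_k = B(x(k\wedge\tau)) + \cmartingale\,(k\wedge\tau)$, is \emph{not} a supermartingale: on $\{\tau > k\}$ condition \eqref{eq:c-martingale} gives $\mathbb{E}[B(x(k+1))\mid\mathcal{F}_k] \le B(x(k)) + \cmartingale$, hence $\mathbb{E}[Z_{k+1}\mid\mathcal{F}_k] \le B(x(k)) + \cmartingale + \cmartingale(k+1) = Z_k + 2\cmartingale$; the added compensator has the wrong sign and doubles the drift rather than cancelling it. Everything downstream of this fails: Ville's inequality does not apply, and the ``cleaner, stronger'' bound $\zeta(\safeset,\timehorizon)\ge 1-\gamma$ that you derive from it is simply false. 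A counterexample: take $B\equiv\gamma$ on $\safeset$ and $B\equiv 1$ outside; this satisfies \eqref{eq:initial_set_constraint}--\eqref{eq:c-martingale} with $\cmartingale = 1$ for \emph{any} dynamics $f$, so a bound of $1-\gamma$ independent of $\cmartingale$ would certify near-certain safety for systems that exit $\safeset$ in one step almost surely. The correct Kushner construction is $Z_k = B(x(k\wedge\tau)) + \cmartingale\,(\timehorizon - k\wedge\tau)$, which \emph{is} a non-negative supermartingale for $k\le\timehorizon$ with $\mathbb{E}[Z_0] = B(\bar x) + \cmartingale\timehorizon \le \gamma + \cmartingale\timehorizon$; Ville's inequality at level $1$, together with $Z_\tau \ge B(x(\tau)) \ge 1$ on $\{\tau\le\timehorizon\}$ via \eqref{eq:unsafe-set-cond}, then yields exactly the stated bound.

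The induction you sketch in the final sentences is the right fallback and does work --- equivalently, set $V_k = B(x(k\wedge\tau))$, note $\mathbb{E}[V_{k+1}-V_k\mid\mathcal{F}_k]\le \cmartingale\mathbf{1}_{\{\tau>k\}}\le\cmartingale$, conclude $\mathbb{E}[V_\timehorizon]\le\gamma+\cmartingale\timehorizon$, and apply Markov's inequality at level $1$ using $\{\tau\le\timehorizon\}\subseteq\{V_\timehorizon\ge 1\}$ --- but you leave it as an outline and never retract the incorrect $1-\gamma$ claim that precedes it. As written, the proposal asserts a false strengthening of the proposition and rests its main line of reasoning on a miscomputed drift; it would need to be rewritten around one of the two correct constructions above before it could stand as a proof.
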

Thanks to Proposition \ref{prop:barrier_prob_safety}, one can formulate the search for a barrier certificate as the following infinite-dimensional stochastic optimization problem
\begin{equation}
    \begin{aligned} 
        \min_{B \in \mathcal{M}, c, \gamma} & \quad \gamma + c \timehorizon \\
        \mathrm{s.t.} & \quad \eqref{eq:initial_set_constraint},~\eqref{eq:unsafe-set-cond},~\eqref{eq:c-martingale},
    \end{aligned}
    \tag{BP}
    \label{eq:infinite-dim-BP}
\end{equation}
where $\mathcal{M}$ represents the set of non-negative measurable functions in $\mathbb{R}^n$. Two challenges emerge when solving problem \eqref{eq:infinite-dim-BP} to solve Problem \ref{prob:MainProblem}: (i) $\mathcal{M}$ is an infinite-dimensional space, and (ii) constraint \eqref{eq:c-martingale} involves computing an expectation over an unknown probability distribution. To address the first challenge, we restrict the search for barrier functions to the class of \acrlong{pwa} functions given by
\begin{equation}
    \begin{aligned}
        \mathcal{M}' = \{ &B: \mathbb{R}^n \mapsto \mathbb{R}_{\geq 0}: \\ &B(x,\theta) = \max \{ B_1(x,\theta), \ldots, B_{\numberregions}(x,\theta) \} \},  
    \end{aligned}
    \label{eq:parametric-func-class-BP}
\end{equation}
with $\theta \in \mathbb{R}^{\ell(n+1)}$ representing the set of parameters $(u_i,v_i) \in \mathbb{R}^{(n+1)}$ that define the barrier function, i.e.,
\[
    B_i(x,\theta) = \left\{
                \begin{matrix}
                    u_i^\top x + v_i, & \text{ if } x \in \region_i, \\
                    0, & \text{ otherwise},
                \end{matrix}   
            \right.
\]
where the polyhedral sets $\xoverline{\region_i} = \{ x \in \mathbb{R}^n: H_i x \leq h_i \}$, for all $i \in \{1, \ldots, \bar{\numberregions}\}$, constitute a partition of $\mathbb{R}^n$. While various classes of barrier functions have been considered in the literature \cite{PJP:07}, \cite{Santoyo2021}, \cite{steinhardt2012finite}, in this work, we focus on \gls{pwa} barrier functions, as they are expressive enough to approximate any non-linear function arbitrarily well and, as we will show in Section \ref{sec:data_driven_uncertain_stochastic_barriers}, this choice leads to a \gls{lp} program to synthesise a \gls{sbf}, thus guaranteeing efficiency. 
For the second challenge, in the next section we develop a new approach to create a chance-constrained approximation of $\eqref{eq:infinite-dim-BP}$ using a novel constraint tightening technique.

\section{An inner chance-constrained approximation of Problem \eqref{eq:infinite-dim-BP}}
\label{sec:chance-constrained-apprx}

Solving Problem \eqref{eq:infinite-dim-BP} is challenging, even in the case that the probability distribution underpinning \eqref{eq:c-martingale} is known. 
In this section, we show how to relax Problem \eqref{eq:infinite-dim-BP} using a reformulation in terms of a chance-constrained optimisation problem whose feasible set is contained in the set of feasible solutions of \eqref{eq:infinite-dim-BP}.
Such ideas have never been used in this context. Hence, we depart completely from the approaches taken by \cite{Jagtap2020}, \cite{PJP:07}, \cite{SALAMATI20217}, \cite{Santoyo2021}, \cite{steinhardt2012finite}, and \cite{vzikelic2023learning}, which either rely on approximating Constraint \eqref{eq:c-martingale} with the empirical distribution, make strong assumptions about the noise distribution to analytically compute the expectation and recast \eqref{eq:c-martingale} as a convex constraint, or rely on convex over-approximations of the expectation to (conservatively) verify \eqref{eq:c-martingale}. Furthermore, due to the inner approximation of \eqref{eq:infinite-dim-BP} in terms of a chance-constrained problem, our approach opens the road to use the tools of scenario optimisation discussed in Section \ref{subsec:prelim_scenario_theory} to obtain strong sample complexity guarantees on the safety probability of System~\eqref{eq:system}.

To this end, fix any $B \in \mathcal{M}$ and $x\in \mathbb{R}^n$, and let $E \in \mathcal{F}$ be a measurable set. Then, observe that
\begin{equation}
    \begin{aligned}
         \mathbb{E}\{ &B (f(x) + \noise(\omega)) \} = \int_E B (f(x) + \noise(\omega)) \mathbb{P}(d\omega) \\
         &+ \int_{E^c} B (f(x) + \noise(\omega)) \mathbb{P}(d\omega).
    \end{aligned}
    \label{eq:initial-intuition}
\end{equation}
Under the assumption that $B(x)\leq M$ for any $x\in \mathbb{R}^n$, which can always be enforced for \glspl{sbf}, the second term on the right-hand side of \eqref{eq:initial-intuition} can be bounded by $M \mathbb{P}\{ E^c \}$. Our main idea then is to choose a particular $E^c$ that allows us to control the right-hand side of \eqref{eq:initial-intuition}. Such an intuitive reasoning is made formal in the next lemma.

\begin{lemma}
    For $B \in \mathcal{M}$ with $\lVert B \rVert_{\infty} = M$, let $c$ and $\noise$ be as in \eqref{eq:infinite-dim-BP}. Define the set
    \begin{equation}
        \begin{aligned}
            E = \{ \omega \in \Omega: &\,B(f(x) + \noise(\omega)) + \buffervar \leq B(x) + c, \\ 
             &\text{ for all } x \in X_s\},
        \end{aligned}
        \label{eq:set-def-main-lemma}
    \end{equation}
    for some constant $\buffervar \geq 0$.
    If there exists an $\epsilon \in (0,1)$ such that $\buffervar \geq M \frac{\epsilon}{1-\epsilon}$ and $\mathbb{P}\{E^c \} \leq \epsilon$, then we have $\mathbb{E}\{B(f(x) + \noise(\omega))\} \leq B(x) + c$ for all $x \in X_s$. 
    \label{lemma:controlling-expectation}
\end{lemma}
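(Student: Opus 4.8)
The plan is to fix an arbitrary $x \in \safeset$ and split the expectation $\Exp{B(f(x) + \noise(\omega))}$ over the event $E$ of \eqref{eq:set-def-main-lemma} and its complement $E^c$, exactly as in \eqref{eq:initial-intuition}. On $E$, the definition of $E$ gives the pointwise bound $B(f(x) + \noise(\omega)) \le B(x) + \cmartingale - \buffervar$, so $\int_E B(f(x)+\noise(\omega))\,\mathbb{P}(d\omega) \le (B(x) + \cmartingale - \buffervar)\,\mathbb{P}\{E\}$; note that no assumption on the sign of the constant $B(x)+\cmartingale-\buffervar$ is needed here, since $\int_E g\,d\mathbb{P} \le \int_E h\,d\mathbb{P}$ whenever $g \le h$ on $E$. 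On $E^c$ we use $0 \le B \le M = \lVert B \rVert_\infty$ to bound $\int_{E^c} B(f(x)+\noise(\omega))\,\mathbb{P}(d\omega) \le M\,\mathbb{P}\{E^c\}$. Writing $p := \mathbb{P}\{E^c\}$, which is at most $\epsilon$ by hypothesis, and $\mathbb{P}\{E\} = 1-p$, this yields
\[
  \Exp{B(f(x) + \noise(\omega))} \le (B(x) + \cmartingale - \buffervar)(1 - p) + M p.
\]

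It then remains to show that the right-hand side is at most $B(x) + \cmartingale$. After rearranging, this inequality is equivalent to $p\,(M - B(x) - \cmartingale) \le \buffervar\,(1 - p)$, and since $B(x) \ge 0$ and $\cmartingale \ge 0$ the left-hand side is at most $Mp$; hence it suffices to prove $Mp \le \buffervar\,(1-p)$. From the hypothesis $\buffervar \ge M\epsilon/(1-\epsilon)$ we get $\buffervar\,(1-p) \ge M\epsilon(1-p)/(1-\epsilon)$, and $M\epsilon(1-p)/(1-\epsilon) \ge Mp$ because this reduces to $\epsilon(1-p) \ge p(1-\epsilon)$, i.e.\ to $\epsilon \ge p$, which holds. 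Since $x \in \safeset$ was arbitrary, we conclude $\Exp{B(f(x) + \noise(\omega))} \le B(x) + \cmartingale$ for all $x \in \safeset$, as claimed.

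I do not anticipate a genuine obstacle in this argument: it is a constraint split followed by elementary algebra, and the inflated buffer $\buffervar \ge M\epsilon/(1-\epsilon)$ is tailored precisely to absorb the $M\,\mathbb{P}\{E^c\}$ contribution from the event on which the pointwise inequality is not assumed to hold. The one delicate point is measurability of the set $E$ in \eqref{eq:set-def-main-lemma}, as it is defined through an intersection over the (generally uncountable) family indexed by $x \in \safeset$, and the integrals above are well-defined only once $E \in \mathcal{F}$ is known. Under the piece-wise affine parametrisation \eqref{eq:parametric-func-class-BP} together with the relaxations used in the later sections, $E$ reduces to a finite Boolean combination of sets each cut out by an affine inequality in $\noise(\omega)$, and is therefore measurable; at the level of generality of the lemma one may instead simply take $E \in \mathcal{F}$ as a standing assumption, consistent with the measurable set $E$ introduced around \eqref{eq:initial-intuition}.
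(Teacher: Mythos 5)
Your proof is correct and follows essentially the same argument as the paper: split the expectation over the event on which the tightened inequality holds and its complement, bound the complementary contribution by $M\,\mathbb{P}\{E^c\}$, and absorb it using $\xi \geq M\epsilon/(1-\epsilon)$. The only cosmetic difference is that you integrate over $E$ itself, whereas the paper introduces the superset $E_x \supseteq E$ (the event where the inequality holds for the fixed $x$ only) and splits over that; both yield the same bound, and your handling of measurability matches the paper's, which relegates it to an appendix.
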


\begin{proof}
    The proof of our result is based on \eqref{eq:initial-intuition}. We first remark that $E$ is a measurable set (see Appendix \ref{subsec:appendix_proof_measurability}). Next, fix any $x \in X_s$ and define the set 
    \begin{equation}
        E_x = \{\omega \in \Omega: B(f(x) + \noise(\omega)) + \buffervar \leq B(x) + c\}. 
        \label{eq:lemma-intermediate-set}
    \end{equation}
    Notice that, by definition, we have $E \subset E_x$. Furthermore, the set $E_x$ is also measurable due to the assumptions on $B$, $f$ and $\eta$, and the fact that measurability is closed under composition. Using \eqref{eq:initial-intuition}, we then obtain
    \begin{equation}
        \begin{aligned}
            \mathbb{E}\{ B(f(x) + \noise(\omega))\} &\leq (B(x) + c - \buffervar) \mathbb{P}\{E_x\} \\
            &+ M\mathbb{P}\{E_x^c\},
        \end{aligned}
        \label{eq:proof-SBF-lemma-1}
    \end{equation}
    where the first term on the right-hand side of \eqref{eq:proof-SBF-lemma-1} follows from the definition of $E_x$ in \eqref{eq:lemma-intermediate-set}, and the second by the uniform boundedness condition on $B$. It remains to show that
    \begin{equation}
        (B(x) + c - \buffervar) \mathbb{P}\{E_x\} + M\mathbb{P}\{E_x^c\} \leq B(x) + c.
    \end{equation}
    To this end, it suffices to show that 
    \begin{equation}
        -\buffervar \mathbb{P}\{E_x\} + M \mathbb{P}\{E_x^c\} \leq 0
        \label{eq:important-equation}
    \end{equation}
    holds, since $B(x) + c$ is non-negative and $P\{E_x\} \leq 1$ by definition. 
    Substituting the two inequalities $\mathbb{P}\{E_x\} \geq 1-\epsilon$ and $\mathbb{P}\{E_x\} \leq \epsilon$ into the left-hand side of \eqref{eq:important-equation} we obtain
    \[
    -\buffervar \mathbb{P}\{E_x\} + M \mathbb{P}\{E_x^c\} \leq -\buffervar(1-\epsilon) + M \epsilon,
    \]
    whose right-hand side is less than or equal to zero due to the fact that  $\buffervar \geq M \frac{\epsilon}{1-\epsilon}$. This concludes the proof of the lemma. 
\end{proof}

Lemma \ref{lemma:controlling-expectation} is enabled by the chance-constraint tightening variable $\buffervar$ through the condition that $\buffervar \geq M \frac{\epsilon}{1-\epsilon}$. 
A reader may question how to choose $\buffervar$ and $\epsilon$. Choosing $\epsilon$ is a trade-off between assigning less probability mass to the uniform upper bound, which is desirable as the uniform upper bound represents a worst-case for barrier value at the next step, is and the amount of data required when applying the scenario approach theory. In our experiments (see Section \ref{sec:experiments}), we employ $\epsilon = 0.005$.
Once $\epsilon$ is chosen, the optimal choice of $\buffervar$ to minimize $\cmartingale$ is $\buffervar = M \frac{\epsilon}{1-\epsilon}$, which follows from the fact that $\buffervar$ is on the left-hand side of the inner inequality of $E$ (and $E_x$) with $c$ on the right-hand side and $M \frac{\epsilon}{1-\epsilon}$ is the smallest allowed value of $\buffervar$.

An immediate consequence of Lemma \ref{lemma:controlling-expectation} is the fact that we can obtain an inner approximation of the optimisation problem \eqref{eq:infinite-dim-BP} in terms of a chance-constrained optimisation problem.

\begin{theorem}
Consider the dynamical system given in System \eqref{eq:system}. Then, we have that for all $\epsilon \in (0,1)$ and positive integers $\timehorizon$ and $M \geq 1$ such that $\buffervar \geq M \frac{\epsilon}{1-\epsilon}$, the feasible set of
	\begin{equation}
		\begin{aligned}
			\min_{B \in \mathcal{M}, c, \gamma} & \quad \gamma + \cmartingale \timehorizon\\
			\textrm{s.t.} & \quad \eqref{eq:initial_set_constraint}, \quad \eqref{eq:unsafe-set-cond}, \quad \gamma \geq 0, \quad \cmartingale \geq 0, \\
                & \quad B(x) \leq M,~\forall x \in \mathbb{R}^n \\ 
                & \quad \mathbb{P}\{\omega \in \Omega: B(f(x)+\noise(\omega)) + \buffervar  \\ 
                & \quad \quad  \leq B(x)  + c,~\forall x \in X_s\} \geq 1-\epsilon,
		\end{aligned}
		\tag{CCBP}
		\label{eq:ccp_opt_pro}
	\end{equation}
	is contained in the feasible set of \eqref{eq:infinite-dim-BP}.	
	\label{theorem:barrier-ccp}
\end{theorem}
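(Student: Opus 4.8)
The plan is to show directly that every feasible triple of \eqref{eq:ccp_opt_pro} is also feasible for \eqref{eq:infinite-dim-BP}; since the two programs share the same objective $\gamma + \cmartingale \timehorizon$, this yields the claimed inclusion of feasible sets. So I would fix an arbitrary feasible $(B,\cmartingale,\gamma)$ of \eqref{eq:ccp_opt_pro}. By construction $B \in \mathcal{M}$, $\gamma \geq 0$, $\cmartingale \geq 0$, and the constraints \eqref{eq:initial_set_constraint} and \eqref{eq:unsafe-set-cond} appear verbatim in \eqref{eq:infinite-dim-BP}. Hence the only constraint of \eqref{eq:infinite-dim-BP} that still needs to be checked is the $\cmartingale$-martingale condition \eqref{eq:c-martingale}, i.e.\ $\mathbb{E}\{B(f(x)+\noise(\uncertaintyelement))\} \leq B(x) + \cmartingale$ for all $x \in \safeset$.

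Next I would extract from feasibility of $(B,\cmartingale,\gamma)$ the two ingredients needed by Lemma \ref{lemma:controlling-expectation}. First, the boundedness constraint $B(x) \leq M$ for all $x \in \mathbb{R}^n$ gives $\tilde{M} := \lVert B \rVert_\infty \leq M$; combining the standing hypothesis $\buffervar \geq M\,\tfrac{\epsilon}{1-\epsilon}$ with the monotonicity of $t \mapsto t\,\tfrac{\epsilon}{1-\epsilon}$ on $\mathbb{R}_{\geq 0}$ yields $\buffervar \geq \tilde{M}\,\tfrac{\epsilon}{1-\epsilon}$. Second, the chance constraint of \eqref{eq:ccp_opt_pro} states exactly that $\mathbb{P}\{E\} \geq 1-\epsilon$ for the event $E$ defined in \eqref{eq:set-def-main-lemma} (with this $B$, $\cmartingale$, $\buffervar$), equivalently $\mathbb{P}\{E^c\} \leq \epsilon$; measurability of $E$, so that this probability is meaningful, is exactly what is invoked inside Lemma \ref{lemma:controlling-expectation} and established in Appendix \ref{subsec:appendix_proof_measurability}.

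With these two facts in place, I would simply apply Lemma \ref{lemma:controlling-expectation} with $\tilde{M}$ in the role of $M$. The lemma's proof uses its boundedness hypothesis only through the pointwise bound $B(f(x)+\noise(\uncertaintyelement)) \leq \tilde{M}$, so it applies unchanged; it then delivers $\mathbb{E}\{B(f(x)+\noise(\uncertaintyelement))\} \leq B(x) + \cmartingale$ for every $x \in \safeset$, which is precisely \eqref{eq:c-martingale}. Therefore $(B,\cmartingale,\gamma)$ satisfies all constraints of \eqref{eq:infinite-dim-BP}, and since it was an arbitrary feasible point of \eqref{eq:ccp_opt_pro}, the feasible set of \eqref{eq:ccp_opt_pro} is contained in that of \eqref{eq:infinite-dim-BP}.

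I do not expect a genuine obstacle: the theorem is essentially a repackaging of Lemma \ref{lemma:controlling-expectation}, whose hypotheses were tailored to match the constraints of \eqref{eq:ccp_opt_pro}. The only point worth a moment of care is the slight gap between the equality $\lVert B \rVert_\infty = M$ assumed in the lemma and the inequality $B \leq M$ imposed in \eqref{eq:ccp_opt_pro}; as sketched above, this is handled either by passing to $\tilde{M} = \lVert B \rVert_\infty$ and using monotonicity of the $\tfrac{\epsilon}{1-\epsilon}$-scaling, or by the observation that the lemma's argument goes through verbatim with ``$=$'' replaced by ``$\leq$''. The hypotheses $M \geq 1$ and $\timehorizon \in \mathbb{N}$ play no role in the inclusion itself; they only ensure the programs are non-vacuous (e.g.\ \eqref{eq:unsafe-set-cond} already forces $\lVert B \rVert_\infty \geq 1$ whenever the feasible set is nonempty).
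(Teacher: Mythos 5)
Your proposal is correct and follows essentially the same route as the paper's own proof: reduce the claim to verifying the martingale condition \eqref{eq:c-martingale} and then invoke Lemma \ref{lemma:controlling-expectation} via the chance constraint, the boundedness constraint, and the hypothesis $\buffervar \geq M\frac{\epsilon}{1-\epsilon}$. Your extra remark reconciling the lemma's assumption $\lVert B\rVert_\infty = M$ with the inequality $B \leq M$ imposed in \eqref{eq:ccp_opt_pro} is a valid and slightly more careful treatment of a point the paper passes over silently.
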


\begin{proof}
Constraints \eqref{eq:initial_set_constraint} and \eqref{eq:unsafe-set-cond} are shared between Problem \eqref{eq:infinite-dim-BP} and \eqref{eq:ccp_opt_pro}. Thus, it is sufficient to show that the chance-constraint of Problem \eqref{eq:ccp_opt_pro} implies that constraint \eqref{eq:c-martingale} is satisfied. Rewriting the chance-constraint as $\mathbb{P}\{E\} \geq 1 - \epsilon$ where $E$ is defined as in Lemma \ref{lemma:controlling-expectation}, it holds that $\mathbb{P}\{E^c\} \leq \epsilon$. From this, the assumption $\buffervar \geq M \frac{\epsilon}{1-\epsilon}$, and the constraint that $B$ is uniformly bounded by $M$, the conditions of Lemma \ref{lemma:controlling-expectation} are satisfied. Thus, it holds that if the chance constraint of Problem \eqref{eq:ccp_opt_pro} is satisfied, then constraint \eqref{eq:c-martingale} is satisfied.
\end{proof}

Unfortunately, the computational burden of solving Problem \eqref{eq:ccp_opt_pro} is not negligible, mainly due to the quantification over all $x \in \safeset$ and to the non-linear function $f$. In Section \ref{sec:uncertain_dynamics_safety_and_ccbp}, we will show how the framework developed in Section \ref{subsec:prelim_lbp} can alleviate this burden.

\subsection{Over-approximating general non-linear systems with uncertain PWA dynamics}
\label{sec:uncertain_dynamics_safety_and_ccbp}

In Section \ref{subsec:prelim_lbp}, we showed that any locally Lipschitz function can be over-approximated by uncertain \gls{pwa} functions, and by extension non-linear systems of the form of System \eqref{eq:system} can be over-approximated by uncertain \gls{pwa} systems.
More formally, an uncertain \gls{pwa} over-approximation of System \eqref{eq:system} is described as follows.
\begin{equation}
    \label{eq:uncertain_pwa_system}
    x(k+1) \in F(x(k)) + \noise(k),
\end{equation}
where $F:\mathbb{R}^n \rightrightarrows \mathbb{R}^n$ is a set-valued mapping defined as   
\begin{equation}
    F(x(k)) = \left\{\hat{f}(x(k), \alpha): \alpha \in [0,1]\right\},
    \label{eq:set-valued-mapping-def}
\end{equation}
with $\hat{f}$ being a set-valued \gls{pwa} function in the uncertain parameter $\alpha$ and such that for any $x\in \mathbb{R}^n$ it holds that $f(x)\in F(x)$. 
Such a definition of a uncertain \gls{pwa} over-approximation of System \eqref{eq:system} guarantees that between System \eqref{eq:system} and \gls{pwa} System \eqref{eq:uncertain_pwa_system} there is a \emph{behavioural inclusion} relation, that is, for any $x\in \mathbb{R}^m$ there exists an $\alpha \in [0, 1]$ such that $\hat{f}(x, \alpha) = f(x)$. 

Intuitively, we would like to synthesise a \gls{sbf} for System \eqref{eq:uncertain_pwa_system} and rely on the behavioural relation described above to ensure that the synthesised \gls{sbf} also is an \gls{sbf} for System \eqref{eq:system}. This is what we do in the following Theorem, where we extend Theorem \ref{theorem:barrier-ccp} to an uncertain \gls{pwa} over-approximation of System \eqref{eq:system}.

\begin{theorem}
    Consider the dynamical system given in \eqref{eq:system} and assume access to an uncertain \gls{pwa} over-approximation \eqref{eq:uncertain_pwa_system} of the system. Then, we have that for all $\epsilon \in (0,1)$ and positive integer $\timehorizon$, if there exist $M \geq 1$ and $\buffervar \geq M \frac{\epsilon}{1-\epsilon}$, then the feasible set of
	\begin{equation}
		\begin{aligned}
			\min_{B \in \mathcal{M}, c, \gamma} & \quad \gamma + \cmartingale \timehorizon\\
			\textrm{s.t.} & \quad \eqref{eq:initial_set_constraint}, \quad \eqref{eq:unsafe-set-cond}, \quad \gamma \geq 0, \quad \cmartingale \geq 0, \\
                & \quad B(x) \leq M,~\forall x \in \mathbb{R}^n \\ 
                & \quad \mathbb{P}\{\omega \in \Omega: B(y+\noise(\omega)) + \buffervar  \\ 
                & \quad \quad  \leq B(x)  + c,~\forall x \in X_s,~ \forall y \in F(x)\} \geq 1-\epsilon,
		\end{aligned}
		\tag{UCCBP}
		\label{eq:uncertain_ccp_opt_pro}
	\end{equation}
	is contained in the feasible set of \eqref{eq:infinite-dim-BP}.	
	\label{theorem:uncertain_barrier-ccp}
\end{theorem}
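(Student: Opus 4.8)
The plan is to reduce Theorem~\ref{theorem:uncertain_barrier-ccp} to Theorem~\ref{theorem:barrier-ccp}: I would show that any triple $(B,\cmartingale,\gamma)$ that is feasible for \eqref{eq:uncertain_ccp_opt_pro} is also feasible for \eqref{eq:ccp_opt_pro} (with the very same $M$, $\buffervar$ and $\epsilon$), after which the desired inclusion of feasible sets follows at once from Theorem~\ref{theorem:barrier-ccp}. Observe first that every constraint of \eqref{eq:ccp_opt_pro} other than the chance constraint --- namely \eqref{eq:initial_set_constraint}, \eqref{eq:unsafe-set-cond}, $\gamma\ge 0$, $\cmartingale\ge 0$ and $B(x)\le M$ for all $x\in\mathbb{R}^n$ --- appears verbatim in \eqref{eq:uncertain_ccp_opt_pro}. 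Hence the only thing left to verify is that the (stronger) chance constraint of \eqref{eq:uncertain_ccp_opt_pro} implies the (weaker-looking) chance constraint of \eqref{eq:ccp_opt_pro}.

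For this I would compare the two underlying events. Let
\[
E = \{\omega\in\Omega : B(f(x)+\noise(\omega)) + \buffervar \le B(x) + \cmartingale,\ \forall x\in\safeset\}
\]
be the event appearing in \eqref{eq:ccp_opt_pro} (as in Lemma~\ref{lemma:controlling-expectation}), and let $\tilde{E}$ be the event appearing in \eqref{eq:uncertain_ccp_opt_pro}, in which the inner inequality is additionally required to hold for all $y\in F(x)$. By the defining property of an uncertain \gls{pwa} over-approximation, the behavioural inclusion $f(x)\in F(x)$ holds for every $x\in\mathbb{R}^n$. Therefore, for any $\omega\in\tilde{E}$ and any $x\in\safeset$, instantiating $y=f(x)\in F(x)$ gives $B(f(x)+\noise(\omega)) + \buffervar \le B(x) + \cmartingale$, so $\tilde{E}\subseteq E$ and consequently $\mathbb{P}\{E\}\ge\mathbb{P}\{\tilde{E}\}\ge 1-\epsilon$. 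Thus the chance constraint of \eqref{eq:ccp_opt_pro} is satisfied, $(B,\cmartingale,\gamma)$ is feasible for \eqref{eq:ccp_opt_pro}, and Theorem~\ref{theorem:barrier-ccp} --- whose hypotheses ($\epsilon\in(0,1)$, $\timehorizon\in\mathbb{N}$, $M\ge1$, $\buffervar\ge M\frac{\epsilon}{1-\epsilon}$) are exactly those assumed here --- yields feasibility for \eqref{eq:infinite-dim-BP}. Alternatively, one could bypass \eqref{eq:ccp_opt_pro} entirely and apply Lemma~\ref{lemma:controlling-expectation} directly, since $\mathbb{P}\{E^c\}\le\epsilon$ together with the bound $B\le M$ immediately gives constraint \eqref{eq:c-martingale}.

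I expect the only genuinely delicate point --- rather than the crux of the argument, which is the one-line inclusion $\tilde{E}\subseteq E$ induced by $f(x)\in F(x)$ --- to be the measurability of $\tilde{E}$, needed for its probability to be well defined. This carries an extra universal quantifier over $F(x)=\{\hat{f}(x,\alpha):\alpha\in[0,1]\}$ compared with $E$. Since $F(x)$ is the image of the compact interval $[0,1]$ under the measurable (indeed piecewise affine in $\alpha$) map $\alpha\mapsto\hat{f}(x,\alpha)$, the condition ``$B(y+\noise(\omega))+\buffervar\le B(x)+\cmartingale$ for all $y\in F(x)$'' can be rewritten as $\sup_{\alpha\in[0,1]}B(\hat{f}(x,\alpha)+\noise(\omega))\le B(x)+\cmartingale-\buffervar$, and measurability then follows by the same reasoning used for $E$ in Appendix~\ref{subsec:appendix_proof_measurability}, since a supremum over a separable index set of jointly measurable functions is measurable. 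Beyond this bookkeeping, the proof is essentially complete once the behavioural inclusion is invoked.
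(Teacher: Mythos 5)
Your proposal is correct and follows essentially the same route as the paper: both reduce to Theorem \ref{theorem:barrier-ccp} by noting that the non-chance constraints coincide and that the behavioural inclusion $f(x)\in F(x)$ makes the event of \eqref{eq:uncertain_ccp_opt_pro} a subset of the event of \eqref{eq:ccp_opt_pro}, so its chance constraint is the stronger one. Your added remark on the measurability of the event with the extra quantifier over $F(x)$ is a sensible piece of bookkeeping that the paper leaves implicit, but it does not change the argument.
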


\begin{proof}
    Constraints \eqref{eq:initial_set_constraint} and \eqref{eq:unsafe-set-cond} are imposed directly in \eqref{eq:uncertain_ccp_opt_pro}. Since $f(x) \in F(x)$ implies that there exists an $\alpha \in [0, 1]$ such that $\hat{f}(x, \alpha) = f(x)$, it holds that $B(f(x) + \noise(\uncertaintyelement)) \leq \sup_{y \in F(x)} B(y+\noise(\uncertaintyelement))$. Therefore, the chance constraint of \eqref{eq:uncertain_ccp_opt_pro} implies the chance constraint of \eqref{eq:ccp_opt_pro}. By Theorem \ref{theorem:barrier-ccp} and transitivity of the subset relation, the feasible set of \eqref{eq:uncertain_ccp_opt_pro} is contained in the feasible set of \eqref{eq:infinite-dim-BP}.
\end{proof}

\section{Data-driven approximation of \eqref{eq:uncertain_ccp_opt_pro} for the class of uncertain dynamical systems}
\label{sec:data_driven_uncertain_stochastic_barriers}
To reformulate \eqref{eq:uncertain_ccp_opt_pro} as a robust \gls{lp} problem, we need to introduce some mathematical notation. Let $\regionset = \{\region_1, \ldots, \region_{\numberregions}\}$ be a partition of the state space associated with a barrier function, as described in Section \ref{sec:stochastic_barrier_functions}. Let us also denote four collections of indices by
\begin{equation}
        \begin{aligned}
        &I = \{1,\ldots, \numberregions\},\\
             &I_\initialset = \{i \in I : \region_i \cap \initialset \neq \emptyset \}, \\
        	&I_\safeset = \{i \in I : \region_i \cap \safeset \neq \emptyset \}, \\ 
        	&I_\unsafeset = \{i \in I : \region_i \cap \unsafeset \neq \emptyset \},
        \end{aligned}
    \label{eq:indices-barrier}
\end{equation}
which represent the collection of all indices, and the elements of $\regionset$ with non-empty intersections with the initial state, the safe set, and unsafe set, respectively. 
Finally, for each pair $(i,j) \in I_\safeset \times I$, we denote the set 
\begin{equation}
    \begin{aligned}
    Q_{ij}(\uncertaintyelement) = \left\{ x \in \region_i : \exists y \in F(x), y + \noise(\uncertaintyelement) \in \region_j \right\},
    \label{eq:random-subset-partition}
    \end{aligned}
\end{equation}
representing the subset of $\region_i$ with $i$ belonging to $I_{\safeset}$ and that is mapped to $\region_j$ under a given realization of the noise. A pictorial example of $Q_{ij}(\uncertaintyelement)$ can be found in Figure \ref{fig:preimage_intersection}. 
Leveraging the results of Theorem \ref{theorem:uncertain_barrier-ccp} and Proposition \ref{prop:scenario_approach_theory} and using the notation we have introduced so far, we obtain the following intermediate result. The goal of Lemma \ref{lemma:scenario-barrier-design} below is two fold: (i) to impose piecewise constraints on the barrier synthesis and (ii) apply scenario approach theory to obtain a tractable solution in the face of an unknown noise distribution.

\begin{lemma}
Let $\sampleset = \{ \uncertaintyelement_1, \ldots, \uncertaintyelement_N \}$ be a collection of $N$ independent samples from the noise distribution $\probmeas$. Fix $\epsilon \in (0,1)$, $M \geq 1$, and $\buffervar \geq M\frac{\epsilon}{1-\epsilon}$, and consider the scenario optimisation program 
	\begin{equation}
		\begin{aligned}
			\min_{z} & \quad \gamma + \cmartingale \timehorizon \\
			\textrm{s.t.} & \quad \gamma \geq 0, \quad \cmartingale \geq 0, \\[1ex]
            &\quad \begin{aligned}
			    &\begin{aligned}
    				&  B_i(x, \theta) \in [0, M], &&\forall x \in \region_i, \, i \in I, \\[1ex]
    				&  B_i(x, \theta) \leq \gamma, &&\forall x \in \region_i, \, i \in I_\initialset, \\[1ex]
    				&  B_i(x, \theta) \geq 1, &&\forall x \in \region_i, \, i \in I_\unsafeset,
			    \end{aligned}\\
                &\begin{aligned}
				    B_j(y + \noise(\uncertaintyelement),\theta) + \buffervar \leq B_i(x,\theta) + c,& \\[1ex]
                    \forall (\uncertaintyelement, i, j) \in \sampleset \times I_\safeset \times I, \, y \in F(x), \, x \in Q_{ij}(\uncertaintyelement)&
                \end{aligned}
			\end{aligned}
		\end{aligned}
		\tag{SBP}
		\label{eq:barrier_sp_opt_pro}
	\end{equation}
where $d = 2 + \numberregions (n + 1)$ is the number of decision variables in Problem \eqref{eq:uncertain_ccp_opt_pro}, and $z = (\gamma, c, \theta)$ is the collection of optimisation variables. Then, with probability at least $1 - \beta$, where $\beta$ is the right-hand side of the inequality in Proposition \eqref{prop:scenario_approach_theory}, the optimal solution of \eqref{eq:barrier_sp_opt_pro} satisfies the constraints of Definition \ref{defi:Barrier_certificate}.
	\label{lemma:scenario-barrier-design}
\end{lemma}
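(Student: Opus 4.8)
The plan is to show that the scenario program \eqref{eq:barrier_sp_opt_pro} is an instance of the generic scenario program \eqref{eq:scenario_program}, so that Proposition \ref{prop:scenario_approach_theory} applies, and that feasibility of \eqref{eq:barrier_sp_opt_pro} implies feasibility of \eqref{eq:uncertain_ccp_opt_pro}, which by Theorem \ref{theorem:uncertain_barrier-ccp} gives a genuine \gls{sbf}. First I would fix the decision vector $z = (\gamma, c, \theta) \in \mathbb{R}^d$ with $d = 2 + \numberregions(n+1)$ and observe that every constraint in \eqref{eq:barrier_sp_opt_pro} is linear — hence convex — in $z$ for each fixed noise realisation: the box, initial, and unsafe-set constraints do not depend on $\omega$ at all, while the martingale-type constraint indexed by $(\omega, i, j, x, y)$ is linear in $(\gamma, c, \theta)$ because $B_i(x,\theta) = u_i^\top x + v_i$ is linear in $\theta$ for $x \in \region_i$. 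The quantifiers over $i \in I_\safeset$, $j \in I$, $x \in Q_{ij}(\omega)$, and $y \in F(x)$ produce, for each sample $\omega_k$, a (possibly infinite) family of linear constraints, whose conjunction is still convex in $z$; collecting these together with the $\omega$-independent constraints defines a single convex constraint function $g(z, \noise(\omega)) \le 0$ as required by \eqref{eq:scenario_program}, and measurability in $\omega$ follows from the measurability of $Q_{ij}(\cdot)$ and composition, exactly as in the proof of Lemma \ref{lemma:controlling-expectation}.

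Next I would invoke Proposition \ref{prop:scenario_approach_theory}: under Assumption \ref{assump:well-posedness-scenario} the optimal solution $z^\star(\sampleset)$ satisfies, with probability at least $1-\beta$ (where $\beta = \sum_{i=0}^{d-1}\binom{N}{i}\epsilon^i(1-\epsilon)^{N-i}$), the bound $V(z^\star(\sampleset)) \le \epsilon$, i.e.\ $\probmeas\{\omega : g(z^\star,\noise(\omega)) > 0\} \le \epsilon$. Unpacking what $g(z^\star,\noise(\omega)) \le 0$ means: it says precisely that $B_j(y+\noise(\omega),\theta^\star) + \buffervar \le B_i(x,\theta^\star) + c^\star$ for all $i \in I_\safeset$, all $j \in I$, all $x \in Q_{ij}(\omega)$, and all $y \in F(x)$. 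The key observation is that this pointwise statement, quantified over all $(i,j)$ and the sets $Q_{ij}(\omega)$, is logically equivalent to the event appearing inside the chance constraint of \eqref{eq:uncertain_ccp_opt_pro}, namely $\{B(y+\noise(\omega)) + \buffervar \le B(x) + c,~\forall x \in X_s,~\forall y \in F(x)\}$. This equivalence is the technical heart of the argument: I would prove it by noting that $\{\region_i\}_{i\in I}$ partitions $\mathbb{R}^n$, so for any $x \in X_s$ there is a unique $i$ with $x \in \region_i$ (and then $i \in I_\safeset$), $B(x) = B_i(x,\theta)$; similarly for any $y + \noise(\omega)$ there is a unique $j$ with $y+\noise(\omega) \in \region_j$ and $B(y+\noise(\omega)) = B_j(y+\noise(\omega),\theta)$; and by the very definition \eqref{eq:random-subset-partition} of $Q_{ij}(\omega)$, the pair $(x,y)$ with $x \in \region_i$, $y \in F(x)$, $y+\noise(\omega)\in\region_j$ is exactly one of those ranged over in \eqref{eq:barrier_sp_opt_pro}. (Here one uses that $B \in \mathcal{M}'$ is the max of the $B_i$'s but, since the $\region_i$ partition the space and $B_i$ is set to $0$ outside $\region_i$, on each region the relevant value is $B_i$; a minor care point is the max-vs-pointwise reconciliation, handled because the constraint $B_i(x,\theta)\ge 0$ keeps the pieces non-negative and the unsafe/box constraints are imposed piece-wise as well.)

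From this equivalence, the event $\{\omega : g(z^\star,\noise(\omega)) > 0\}$ coincides with the complement of the event inside the chance constraint of \eqref{eq:uncertain_ccp_opt_pro}, so $V(z^\star) \le \epsilon$ is exactly the statement that the chance constraint $\probmeas\{\cdots\} \ge 1-\epsilon$ holds. Together with the $\omega$-independent constraints of \eqref{eq:barrier_sp_opt_pro} — which reproduce $B(x)\le M$ for all $x$, \eqref{eq:initial_set_constraint}, \eqref{eq:unsafe-set-cond}, $\gamma\ge 0$, $c\ge 0$, again using the partition to pass from the piece-wise statements to the global ones — this shows $z^\star(\sampleset)$ is feasible for \eqref{eq:uncertain_ccp_opt_pro} with probability at least $1-\beta$. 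Finally, by Theorem \ref{theorem:uncertain_barrier-ccp}, the feasible set of \eqref{eq:uncertain_ccp_opt_pro} is contained in that of \eqref{eq:infinite-dim-BP}, so the associated $B(\cdot,\theta^\star)$, $\gamma^\star$, $c^\star$ satisfy constraints \eqref{eq:initial_set_constraint}, \eqref{eq:unsafe-set-cond}, \eqref{eq:c-martingale}, i.e.\ the conditions of Definition \ref{defi:Barrier_certificate}, which is the claim. I expect the main obstacle to be the careful bookkeeping in the quantifier-equivalence step — in particular verifying that ranging over $(\omega,i,j)\in \sampleset\times I_\safeset\times I$ with $x \in Q_{ij}(\omega)$ and $y\in F(x)$ captures every $(x,y)$ pair that the chance constraint of \eqref{eq:uncertain_ccp_opt_pro} quantifies over, including handling regions on the boundary between pieces and confirming that no relevant pair is dropped when the noise pushes the successor out of the state-space partition's domain (which is why $j$ ranges over all of $I$, not just $I_\safeset$).
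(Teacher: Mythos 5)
Your proposal is correct and follows exactly the route the paper intends (the paper leaves this proof implicit, stating only that the lemma is obtained by ``leveraging the results of Theorem~\ref{theorem:uncertain_barrier-ccp} and Proposition~\ref{prop:scenario_approach_theory}''): cast \eqref{eq:barrier_sp_opt_pro} as a scenario program with one convex (supremum-of-affine) constraint per sample, apply the scenario bound to conclude the chance constraint of \eqref{eq:uncertain_ccp_opt_pro} holds with confidence $1-\beta$, use the partition structure and the definition of $Q_{ij}(\uncertaintyelement)$ to pass between the piece-wise and global formulations, and finish with the feasible-set inclusion of Theorem~\ref{theorem:uncertain_barrier-ccp}. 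Your flagged care points --- the quantifier bookkeeping, region boundaries, and $j$ ranging over all of $I$ --- are exactly the right ones, and your handling of them is sound.
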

\begin{figure}
    \centering
    \includegraphics[width=0.45\textwidth]{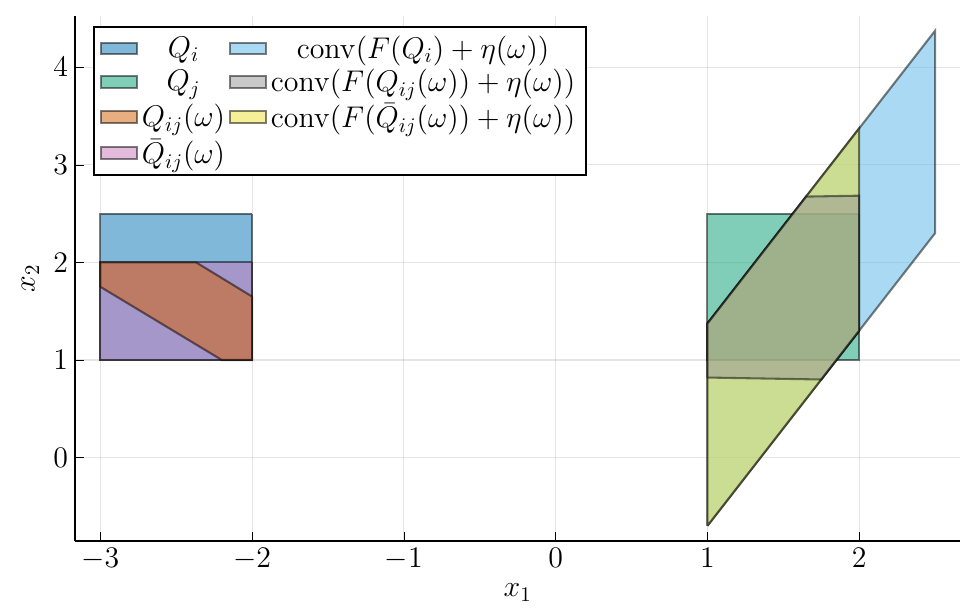}
    \caption{Given two regions $\region_i, \region_j$ and a realisation of the noise $\uncertaintyelement$, the set $Q_{ij}(\uncertaintyelement)$ represents the subset of $x\in \region_i$ such that $\hat{f}(x, \alpha) + \noise(\uncertaintyelement) \in \region_j$ for some $\alpha \in [0, 1]$. In other words, $Q_{ij}(\uncertaintyelement)$ is the subset of $\region_i$ that can reach $\region_j$ given the realisation of the noise $\uncertaintyelement$. Unfortunately, $Q_{ij}(\omega)$ is not easily computed (the example above is an exception, see Fig. \ref{fig:non-convex_preimage}) and thus in Section \ref{subsec:over-approximate_preimage}, we will compute an over-approximation $\xoverline{Q}_{ij}(\uncertaintyelement)$.}
    \label{fig:preimage_intersection}
\end{figure}

\begin{remark}
The state-of-the-art literature on \acrfull{saa} for \gls{sbf} design relies on Chebyshev’s inequality \cite{SALAMATI20217} to bound the probability of satisfaction, which yields a sample complexity proportional to $O(1/\beta)$. Instead, under the assumption that the barrier is uniformly upper bounded, which is always the case for our method, in Lemma \ref{lemma:scenario-barrier-design} we can rely on Hoeffding's inequality, which yields a sample complexity proportional to $O(\log(1/\beta))$. \qed
\end{remark}

The relevance of Lemma \ref{lemma:scenario-barrier-design} towards the general framework proposed in the paper can be summarised by two main points: (1) Lemma \ref{lemma:scenario-barrier-design} establishes a sufficient condition to enforce the constraints of \eqref{eq:uncertain_ccp_opt_pro} by restricting the attention to each element of the partition $\regionset$ individually; (2) it allows us to use the duality results of Section \ref{subsec:prelim_LP} to obtain a computationally tractable reformulation of the optimisation problem in Lemma \ref{lemma:scenario-barrier-design}
into a robust \gls{lp}.  
To illustrate the latter point, let $i \in I_0$ and consider its corresponding partition $\region_i = \{x: H_i x \leq h_i\}$. Observe that $B(x,\theta) \leq \gamma$ for all $x \in \region_i$ can be written as $h_i^\top \lambda \leq \barrierc_i$ and $H_i^\top \lambda = - \barrierl_i$ hold where $\lambda_i$ is a non-negative decision variable. This follows from the fact that within the region $\region_i$ the function $B(x, \theta) = B_i(x, \theta)$ is affine and via strong duality (Prop. \ref{prop:main_result_robust_LP}), we can rewrite this robust constraint to two regular linear constraints.
The rewrite of the non-negativity, uniform upper bound, and unsafe set robust constraints (see Def. \ref{defi:Barrier_certificate}) to regular linear constrains follow a similar argument. Hence, for brevity, we omit the reformulation of these.

The remaining constraint $B(y + \noise(\uncertaintyelement), \theta) + \buffervar \leq B(x, \theta) + \cmartingale$ for all $\uncertaintyelement \in \sampleset$, $x \in \safeset$, $y \in F(x)$, requires more care, yet Lemma \ref{lemma:scenario-barrier-design} also enables a computationally tractable reformulation of this.
To this end, consider $(i, j) \in I_\safeset \times I$ and $\uncertaintyelement \in \sampleset$.
A challenge in reformulating the constraint as a linear constraint is that the set $Q_{ij}(\uncertaintyelement)$ is not a polyhedron or even convex (see Fig. \ref{fig:non-convex_preimage}). For now, we assume access to a polyhedral over-approximation $Q_{ij}(\uncertaintyelement) \subset \xoverline{Q}_{ij}(\uncertaintyelement) = \{x : H_{ij\uncertaintyelement}x \leq h_{ij\uncertaintyelement} \}$.
Then if we impose the constraint for all $x \in \xoverline{Q}_{ij}(\uncertaintyelement)$, then it trivially follows that it also holds for all $x \in Q_{ij}(\uncertaintyelement)$.
We will defer the discussion of how to compute $\xoverline{Q}_{ij}(\uncertaintyelement)$ to Section \ref{subsec:over-approximate_preimage}.

\begin{prop}\label{prop:finite_cmartingale_constraints}
    Let $(i, j) \in I_\safeset \times I$ and $\uncertaintyelement \in \sampleset$ be given. Assume that there exists two affine functions such that $\underline{A}_ix + \underline{b}_i \leq F(x) \leq \xoverline{A}_ix + \xoverline{b}_i$ for all $x \in \region_i$. Then it holds that $B_j(y + \noise(\uncertaintyelement), \theta) + \buffervar \leq B_i(x, \theta) + \cmartingale$ for all $x \in \xoverline{Q}_{ij}(\uncertaintyelement), y \in F(x)$ if and only if the following constraints hold
    \begin{align*}
        h_{ij\uncertaintyelement}^\top \underline{\lambda}_{ij\uncertaintyelement} &\leq \barrierc_i - \barrierc_j - \barrierl_j^\top (\underline{b}_i + \noise(\uncertaintyelement)) + c - \buffervar, \\
        H^\top_{ij\uncertaintyelement} \underline{\lambda}_{ij\uncertaintyelement} &= \underline{A}_i^{\top} \barrierl_j - \barrierl_i,\\
        h_{ij\uncertaintyelement}^\top \xoverline{\lambda}_{ij\uncertaintyelement} &\leq \barrierc_i - \barrierc_j - \barrierl_j^\top (\xoverline{b}_i + \noise(\uncertaintyelement)) + c - \buffervar, \\
        H^\top_{ij\uncertaintyelement} \xoverline{\lambda}_{ij\uncertaintyelement} &= \xoverline{A}_i^{\top} \barrierl_j - \barrierl_i,
    \end{align*}
    where $\underline{\dualvariable}_{ij\uncertaintyelement}$, $\xoverline{\dualvariable}_{ij\uncertaintyelement}$ are a non-negative dual variables.
\end{prop}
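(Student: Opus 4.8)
The plan is to unfold the piece-wise affine structure of the barrier, eliminate the inner quantifier over $y \in F(x)$ by a vertex argument, and then invoke the robust-LP duality of Proposition~\ref{prop:main_result_robust_LP} twice — once for each of the two resulting semi-infinite inequalities.

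First I would substitute the affine representations. On the region $\region_i$ the barrier equals its affine piece $B_i(x,\theta) = \barrierl_i^\top x + \barrierc_i$, and likewise $B_j(\cdot,\theta) = \barrierl_j^\top(\cdot) + \barrierc_j$; plugging these into $B_j(y+\noise(\uncertaintyelement),\theta) + \buffervar \leq B_i(x,\theta) + \cmartingale$ and collecting the constant and $\noise(\uncertaintyelement)$ terms turns the constraint into
\[
\barrierl_j^\top y - \barrierl_i^\top x \;\leq\; \barrierc_i - \barrierc_j - \barrierl_j^\top\noise(\uncertaintyelement) + \cmartingale - \buffervar
\]
required for all $x \in \xoverline{Q}_{ij}(\uncertaintyelement)$ and all $y \in F(x)$.

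Second, I would remove the quantifier over $y$. By the construction \eqref{eq:set-valued-mapping-def} together with Proposition~\ref{prop:continuous_function_as_uncertain_pwa}, on $\region_i$ the set $F(x)$ is exactly the segment $\{\alpha(\underline{A}_ix+\underline{b}_i)+(1-\alpha)(\xoverline{A}_ix+\xoverline{b}_i):\alpha\in[0,1]\}$, so $\barrierl_j^\top y$ ranges over the interval between $\barrierl_j^\top(\underline{A}_ix+\underline{b}_i)$ and $\barrierl_j^\top(\xoverline{A}_ix+\xoverline{b}_i)$; since the left-hand side is affine in $y$, the constraint holds for every $y\in F(x)$ if and only if it holds for these two extreme affine selections $y=\underline{A}_ix+\underline{b}_i$ and $y=\xoverline{A}_ix+\xoverline{b}_i$. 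Substituting each and moving the $x$-terms to one side yields two semi-infinite linear inequalities of exactly the shape $(A z + a)^\top x \le B z + b$ for all $x\in\xoverline{Q}_{ij}(\uncertaintyelement)=\{x:H_{ij\uncertaintyelement}x\le h_{ij\uncertaintyelement}\}$, where the decision vector is $z=(\gamma,c,\theta)$ and, for the first selection, $Az+a = \underline{A}_i^\top\barrierl_j-\barrierl_i$ and $Bz+b = \barrierc_i-\barrierc_j-\barrierl_j^\top(\underline{b}_i+\noise(\uncertaintyelement))+\cmartingale-\buffervar$ (and the analogue with overlines for the second).

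Finally, I would apply Proposition~\ref{prop:main_result_robust_LP} separately to each of the two semi-infinite inequalities: the first is equivalent to the existence of $\underline{\dualvariable}_{ij\uncertaintyelement}\ge 0$ with $h_{ij\uncertaintyelement}^\top\underline{\dualvariable}_{ij\uncertaintyelement}\le Bz+b$ and $H_{ij\uncertaintyelement}^\top\underline{\dualvariable}_{ij\uncertaintyelement}=Az+a$, and symmetrically for the second with $\xoverline{\dualvariable}_{ij\uncertaintyelement}$; writing these out reproduces verbatim the four displayed conditions, and because the two robust inequalities are imposed conjunctively, the overall equivalence is an "if and only if". I expect the step needing the most care to be the $y$-quantifier reduction: one must argue that the worst case over $F(x)$ is genuinely attained at the two vertex affine functions — which relies on $F(x)$ being precisely the segment, not merely a box containing it — so that the reformulation is exact rather than only sufficient. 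The remaining points are routine bookkeeping: tracking which affine piece ($B_i$ versus $B_j$, hence $\barrierl_i$ versus $\barrierl_j$) supplies each term, and noting that the degenerate case $\xoverline{Q}_{ij}(\uncertaintyelement)=\emptyset$ is already absorbed into the statement $\mathcal{Z}=\mathcal{Z}'$ of Proposition~\ref{prop:main_result_robust_LP}.
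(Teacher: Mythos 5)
Your proof is correct and follows essentially the same route as the paper's: exploit affinity in the uncertainty parameter to reduce the quantifier over $F(x)$ to the two extreme selections $\underline{A}_ix+\underline{b}_i$ and $\xoverline{A}_ix+\xoverline{b}_i$, and apply the robust-LP duality of Proposition~\ref{prop:main_result_robust_LP} to handle the quantifier over $x\in\xoverline{Q}_{ij}(\uncertaintyelement)$. The only difference is the order of the two steps --- the paper dualises first for each fixed $\alpha$ and then argues the dual constraints are affine in $\alpha$, whereas you perform the vertex reduction at the primal level before dualising; your ordering is if anything slightly cleaner, since it sidesteps the question of whether the dual multiplier may depend on $\alpha$.
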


Collecting together all finite sets of constraints, a finite representation of the semi-infinite program \eqref{eq:barrier_sp_opt_pro} is given as
\begin{equation}
\begin{aligned}
    \min_{z} & \quad \gamma + \cmartingale \timehorizon\\
    \subjectto & \quad \gamma \geq 0, \, \cmartingale \geq 0,  \\[1ex]
        & \quad \begin{alignedat}{3}
        & \text{(Non-negativity)} \\
        & h_i^\top \underline{\nu}_{i} \leq \barrierc_i, \;  H_i^\top \underline{\nu}_i = -\barrierl_i, \text{ for all } i \in I, \\[1ex]
        & \text{(Uniform upper bound)} \\
        & h_{i}^\top \xoverline{\nu}_i \leq M - \barrierc_i, \; H_{i}^\top \xoverline{\nu}_i = \barrierl_i, \text{ for all } i \in I, \\[1ex]
        & \text{(Initial set)} \\
        & h_{i0}^\top \mu^0_i\leq \gamma - \barrierc_i, \;H^\top_{i0} \mu^0_i = \barrierl_i, \text{ for all } i \in I_\initialset,  \\[1ex]
        & \text{(Unsafe set)} \\
        & h_{i}^\top \mu_i^u\leq \barrierc_i - 1, \; H^\top_{i} \mu_i^u = -\barrierl_{i}, \text{ for all } i \in I_\unsafeset, \\[1ex]
        & \text{(One-step constraints)} \\
        & h_{ij\uncertaintyelement}^\top \underline{\lambda}_{ij\uncertaintyelement} \leq \barrierc_i - \barrierc_j - \barrierl_j^\top (\underline{b}_i + \noise(\uncertaintyelement)) + c - \buffervar, \\[1ex]
        & H^\top_{ij\uncertaintyelement} \underline{\lambda}_{ij\uncertaintyelement} = \underline{A}_i^{\top} \barrierl_j - \barrierl_i, \\[1ex]
        & h_{ij\uncertaintyelement}^\top \xoverline{\lambda}_{ij\uncertaintyelement} \leq \barrierc_i - \barrierc_j - \barrierl_j^\top (\xoverline{b}_i + \noise(\uncertaintyelement)) + c - \buffervar, \\[1ex]
        & H^\top_{ij\uncertaintyelement} \xoverline{\lambda}_{ij\uncertaintyelement} = \xoverline{A}_i^{\top} \barrierl_j - \barrierl_i, \text{ for all } \uncertaintyelement \in \sampleset, \\
        & \qquad \text{ for all } (i, j) \in I_\safeset \times I,
    \end{alignedat}
\end{aligned}
\tag{FSBP}
\label{eq:barrier_sp_opt_pro_finite}
\end{equation}
where $\underline{\nu}_{i}, \xoverline{\nu}_i, \mu^0_i, \mu_i^u, \underline{\dualvariable}_{ij\uncertaintyelement}, \xoverline{\dualvariable}_{ij\uncertaintyelement}$ are non-negative dual variables.
$(H_{i0}, h_{i0})$ denotes the half-space representation of $\region_i \cap \initialset$.

In Corollary \ref{corollary:scenario-barrier-design}, we put together the results we introduced and show how probabilistic safety can be computed via the scenario approach using samples of the random variable $\noise(\uncertaintyelement)$.

\begin{corollary}\label{corollary:scenario-barrier-design}
    Assume that $\sampleset = \{ \uncertaintyelement_1, \ldots, \uncertaintyelement_N \}$ is a collection of $N$ independent samples from the noise distribution $\probmeas$. Fix $\epsilon \in (0,1)$, $M \geq 1$, and $\buffervar \geq M\frac{\epsilon}{1-\epsilon}$, and let $\beta$ be defined as Prop. \ref{prop:scenario_approach_theory} where $d = 2 + \numberregions (n + 1)$ is the number of (non-dual) decision variables in Problem \eqref{eq:barrier_sp_opt_pro_finite}.
    Consider the optimal (primal) solution $z^\star(\sampleset)=(\cmartingale^\star,\gamma^\star,\theta^\star)$ of Problem \eqref{eq:barrier_sp_opt_pro_finite}.
    Then, with confidence $1-\beta$, it holds that \[\zeta(\safeset, \timehorizon) \geq 1 - (\gamma^\star + \cmartingale^\star \timehorizon).\]
\end{corollary}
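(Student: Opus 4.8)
\textbf{Proof proposal for Corollary \ref{corollary:scenario-barrier-design}.}

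The plan is to chain together the results already established in the excerpt, viewing \eqref{eq:barrier_sp_opt_pro_finite} as the concrete computational surrogate for the chance-constrained program \eqref{eq:uncertain_ccp_opt_pro}. First I would observe that \eqref{eq:barrier_sp_opt_pro_finite} is exactly the scenario program \eqref{eq:barrier_sp_opt_pro} after applying strong LP duality (Proposition \ref{prop:main_result_robust_LP}) to each of the semi-infinite constraint families: the non-negativity, uniform-upper-bound, initial-set, and unsafe-set constraints are rewritten region-by-region using the half-space representations $\region_i = \{x : H_i x \leq h_i\}$ (and $\region_i \cap \initialset$ for the initial-set family), and the one-step martingale constraints are rewritten via Proposition \ref{prop:finite_cmartingale_constraints} using the over-approximations $\xoverline{Q}_{ij}(\uncertaintyelement)$. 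By Proposition \ref{prop:main_result_robust_LP} these dualized constraint sets have feasible sets identical to the original robust ones, so \eqref{eq:barrier_sp_opt_pro_finite} and \eqref{eq:barrier_sp_opt_pro} share the same feasible set projected onto $z = (\gamma, c, \theta)$, and hence the same optimal primal solution $z^\star(\sampleset)$. I should note that passing from $Q_{ij}(\uncertaintyelement)$ to a superset $\xoverline{Q}_{ij}(\uncertaintyelement)$ only shrinks the feasible set, which is fine since we only need soundness of the resulting certificate, not tightness.

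Next I would invoke Lemma \ref{lemma:scenario-barrier-design} directly: it already states that, with probability at least $1-\beta$ over the draw of $\sampleset$, the optimal solution of \eqref{eq:barrier_sp_opt_pro} satisfies the three defining constraints \eqref{eq:initial_set_constraint}, \eqref{eq:unsafe-set-cond}, \eqref{eq:c-martingale} of Definition \ref{defi:Barrier_certificate}, i.e.\ the function $B(\cdot, \theta^\star)$ is a genuine \gls{sbf} for System \eqref{eq:system} with constants $\gamma^\star$ and $\cmartingale^\star$. The only bookkeeping here is to confirm that the value of $d = 2 + \numberregions(n+1)$ used to define $\beta$ in the corollary matches the number of non-dual decision variables to which the scenario bound of Proposition \ref{prop:scenario_approach_theory} must be applied; this is consistent because the dual variables $\underline{\nu}_i, \xoverline{\nu}_i, \mu_i^0, \mu_i^u, \underline{\dualvariable}_{ij\uncertaintyelement}, \xoverline{\dualvariable}_{ij\uncertaintyelement}$ are auxiliary and do not enter the objective or the violation event, and Assumption \ref{assump:well-posedness-scenario} is taken to hold for the program.

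Finally I would apply Proposition \ref{prop:barrier_prob_safety}: since on the high-probability event $B(\cdot, \theta^\star)$ is an \gls{sbf} with parameters $\gamma^\star, \cmartingale^\star$ and $\initialset \subseteq \safeset$, martingale theory gives $\zeta(\safeset, \timehorizon) \geq 1 - (\gamma^\star + \cmartingale^\star \timehorizon)$, which is precisely the claim with confidence $1 - \beta$. I expect the main subtlety — rather than a hard obstacle — to be the careful argument that the dualization in \eqref{eq:barrier_sp_opt_pro_finite} preserves the feasible set exactly (so that $z^\star(\sampleset)$ really is the scenario optimum of Lemma \ref{lemma:scenario-barrier-design}, not of a relaxed problem), together with checking that replacing $Q_{ij}(\uncertaintyelement)$ by $\xoverline{Q}_{ij}(\uncertaintyelement)$ keeps the chance-constraint interpretation intact: the violation event in the scenario bound is defined through the \emph{tightened} constraint, so one must make sure the $\epsilon$-level and the buffer $\buffervar \geq M\frac{\epsilon}{1-\epsilon}$ are threaded consistently from Lemma \ref{lemma:controlling-expectation} through Theorem \ref{theorem:uncertain_barrier-ccp} down to \eqref{eq:barrier_sp_opt_pro_finite}. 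Everything else is a direct concatenation of the cited results.
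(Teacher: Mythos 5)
Your proposal is correct and follows the same route the paper intends: the corollary is proved by chaining Lemma \ref{lemma:scenario-barrier-design} (with probability $1-\beta$ the optimiser of the scenario program is a valid \gls{sbf}), the duality/over-approximation equivalences of Propositions \ref{prop:main_result_robust_LP} and \ref{prop:finite_cmartingale_constraints} that identify \eqref{eq:barrier_sp_opt_pro_finite} with a sound (conservative) finite reformulation of \eqref{eq:barrier_sp_opt_pro}, and Proposition \ref{prop:barrier_prob_safety} to convert the \gls{sbf} into the safety bound. Your added remarks on the over-approximation $\xoverline{Q}_{ij}(\uncertaintyelement)$ only shrinking the feasible set and on the dual variables not counting toward $d$ are exactly the soundness considerations the paper relies on implicitly.
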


Thus, by solving Problem \eqref{eq:barrier_sp_opt_pro_finite}, we can certify probabilistic safety with high confidence. Note that na\"ively trying to solve \eqref{eq:barrier_sp_opt_pro_finite} can soon become intractable on contemporary hardware, due to both memory requirements and computational time. In particular, the cardinality of the Cartesian product of $I_{\safeset}$, $I$, and $\sampleset$ can already be prohibitively large for relatively small systems. In the next section, we will discuss algorithmic strategies that make Problem \eqref{eq:barrier_sp_opt_pro_finite} computationally tractable.

\section{Algorithms for program construction}
\label{sec:computational_methods_and_efficiency}
In this section, we discuss three aspects that allows one to solve Problem \eqref{eq:barrier_sp_opt_pro} efficiently. Namely, in Section \ref{subsec:over-approximate_preimage}, we discuss how to compute a polyhedral over-approximation of $Q_{ij}(\uncertaintyelement)$. In Subsection \ref{subsec:pwa_cert_convex_hull_noise}, we introduce an a-priori sample discarding procedure for Problem \eqref{eq:barrier_sp_opt_pro_finite}, which guarantees the same optimal solution, but allowing one to consider less samples in the optimisation problem. Finally, in Subsection \ref{subsec:spatial_indexing_pwa}, we employ spatial indexing methods to efficiently find the set of triplets with a non-empty $Q_{ij}(\uncertaintyelement)$. More specifically, we will rely on the fact that, often, for many triplets $(i, j, \uncertaintyelement)$, the set $Q_{ij}(\uncertaintyelement)$ is empty, thus the martingale constraint is trivially satisfied. That is, $Q_{ij}(\uncertaintyelement)$ is empty if $\im_i(\region_{i}, \uncertaintyelement) \cap \region_{j} = \emptyset $ where the image for region $\region_i$ is defined as 
\begin{equation}
    \im(\region_i, \uncertaintyelement) = \left\{ y + \noise(\uncertaintyelement) : x \in \region_i, y \in F(x) \right\}.
\end{equation}

\begin{figure}
    \centering
    \includegraphics[width=0.45\textwidth]{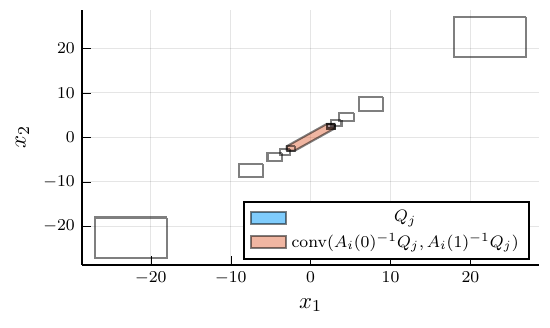}
    \caption{A pictorial example that $Q_{ij}(\uncertaintyelement)$ is not necessarily convex. In this example, we have $\underline{A_i} = -I$, $\xoverline{A_i} = I$, and $\underline{b_i} = \xoverline{b_i} = 0$, which is a valid uncertain affine relaxation of the trivial function $f(x) = 0$ in the non-negative orthant. We consider the region $\region_j = [2, 3]^2$ and plot $A_i(\alpha)^{-1}\region_j$ for 10 different values of $\alpha \in [0, 1]$. Note that in this case $\xoverline{A_i} = I$, hence $\region_j = A_i(1)^{-1}\region_j$ and $\region_j$ (blue) is contained in the convex hull (pink).
    $Q_{ij}(\uncertaintyelement)$ can take on complex shapes and no method exists for exactly computing $Q_{ij}(\uncertaintyelement)$. Therefore, we compute a sound over-approximation $\xoverline{Q}_{ij}(\uncertaintyelement) \subset Q_{ij}(\uncertaintyelement)$.}
    \label{fig:non-convex_preimage}
\end{figure}

\subsection{Over-approximation of $Q_{ij}(\uncertaintyelement)$}\label{subsec:over-approximate_preimage}
As illustrated in Fig. \ref{fig:preimage_intersection}, computing a polyhedral over-approximation of $Q_{ij}(\uncertaintyelement)$ is challenging, as $Q_{ij}(\uncertaintyelement)$ is (possibly) non-convex due to the uncertain\footnote{If $F$ is a deterministic affine transformation and $\region_j$ is a polyhedron, then $Q_{ij}(\uncertaintyelement)$ is also a polyhedron and analytical methods for computation exist.} affine transformation $F$.
Furthermore, it is not sufficient to compute the convex hull for the vertices of the uncertainty variable $\alpha$, that is,
\begin{equation}
    \begin{aligned}
	\conv & \left( \left\{ x \in \region_i : \hat{f}_i(x, 0) + \noise(\uncertaintyelement)) \in \region_j \right\} \cup \right. \\
        & \;\, \left. \left\{ x \in \region_i : \hat{f}_i(x, 1) + \noise(\uncertaintyelement)) \in \region_j \right\}\right),
    \end{aligned}
\end{equation}
as shown in Fig. \ref{fig:preimage_intersection}.
However, we note that by definition $Q_{ij}(\uncertaintyelement) \subseteq \region_i$, that is, $\region_i$ is a, generally conservative, polyhedral over-approximation of $Q_{ij}(\uncertaintyelement)$. Hence, our approach to find a polyhedral over-approximation of $Q_{ij}(\uncertaintyelement)$, denoted $\xoverline{Q}_{ij}(\uncertaintyelement)$, is to start from $\region_i$ and then iteratively removing subsets of $\region_i \setminus Q_{ij}(\uncertaintyelement)$.
To accomplish this, we rely on repeated bisection.
To simplify the presentation, in what follows, we assume that $\region_i$ is a hyperrectangle. Note, however, that the procedure generalises to compact polyhedra in half-space representation.

Fig. \ref{fig:example_bisection_algorithm} shows an example of the bisection algorithm for two regions $\region_i, \region_j$ and a given sample $\uncertaintyelement$. The bisection is repeated twice along each axis, namely once to increase the lower bound, once to decrease the upper bound. Note that $Q_{ij}(\uncertaintyelement)$, although depicted in Fig. \ref{fig:example_bisection_algorithm}, is unknown and possibly non-convex, and further we cannot readily check if $Q_{ij}(\uncertaintyelement) \subset \xoverline{Q}_{ij}(\uncertaintyelement)$.
However, recall that $Q_{ij}(\uncertaintyelement)$ is the subset of region $\region_{i}$ that under a realisation of the noise $\uncertaintyelement$ reaches region $\region_j$ in one time step.
As a result, we can instead check if $\conv(\xoverline{Q}_{ij}(\uncertaintyelement) + \noise(\uncertaintyelement))$ intersects with $\region_j$.
By applying this algorithm, we compute a small over-approximation $\xoverline{Q}_{ij}(\uncertaintyelement)$ of $Q_{ij}(\uncertaintyelement)$. Indeed as $\xoverline{Q}_{ij}(\uncertaintyelement)$ is an over-approximation, using $\xoverline{Q}_{ij}(\uncertaintyelement)$ for the constraints in Proposition \ref{prop:finite_cmartingale_constraints} yields a sound, although slightly conservative, solution.

\begin{figure*}
    \centering
    \begin{subfigure}{0.19\textwidth}
        \centering
        \includegraphics[width=\linewidth, page=1]{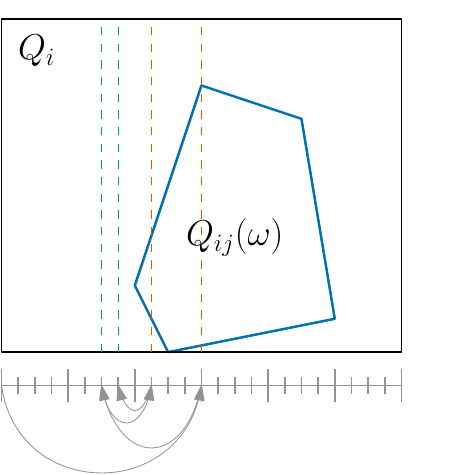}
        \caption{}
    \end{subfigure}
    \begin{subfigure}{0.19\textwidth}
        \centering
        \includegraphics[width=\linewidth, page=2]{figures/bisection.pdf}
        \caption{}
    \end{subfigure}
    \begin{subfigure}{0.19\textwidth}
        \centering
        \includegraphics[width=\linewidth, page=3]{figures/bisection.pdf}
        \caption{}
    \end{subfigure}
    \begin{subfigure}{0.19\textwidth}
        \centering
        \includegraphics[width=\linewidth, page=4]{figures/bisection.pdf}
        \caption{}
    \end{subfigure}
    \begin{subfigure}{0.19\textwidth}
        \centering
        \includegraphics[width=\linewidth, page=5]{figures/bisection.pdf}
        \caption{}
    \end{subfigure}
    \caption{An example of the bisection algorithm to compute the over-approximation $\xoverline{Q}_{ij}(\uncertaintyelement)$ of $Q_{ij}(\uncertaintyelement)$. The set $Q_{ij}(\uncertaintyelement)$ is unknown and possibly non-polyhedral, but $\region_i$ is a sound over-approximation. By bisection from either side (first the lower then the upper bound) along each axis we obtain a smaller over-approximation. We start by bisecting for $x_1$ (\textbf{(a)} and \textbf{(b)}) followed by $x_2$ (\textbf{(c)} and \textbf{(d)}). This results in the over-approximation $\xoverline{Q}_{ij}(\uncertaintyelement)$ in \textbf{(e)}.}
    \label{fig:example_bisection_algorithm}
\end{figure*}

\subsection{Convex hull over the sample set}
\label{subsec:pwa_cert_convex_hull_noise}
To reduce the number of constraints of Problem \eqref{eq:barrier_sp_opt_pro_finite}, observe that the constraints of the problem are affine in $\noise(\uncertaintyelement)$. This implies that the active constraints, also known as support constraints, will always belong to the vertices of the convex hull over $\noise(\sampleset):=\{\noise(\omega_1), \ldots, \noise(\omega_N)\}$, enabling a great reduction in the number of constraints in the program.
That is, we can impose the constraint only on the following set of samples:
\begin{equation}\label{eq:convex_hull_noise}
    \xoverline{\sampleset} = \{ \omega \in \sampleset : \noise(\uncertaintyelement) \in \vertices(\conv(\noise(\sampleset))) \}.
\end{equation}

In the experiments conducted (see Section \ref{sec:experiments}), we find that generally, in practice, the cardinality of $\xoverline{\sampleset}$ is orders of magnitude lower than the cardinality of $\sampleset$. Thus, this can greatly improve the efficiency of our approach.

\begin{remark}
The method presented in this subsection was discovered independently, but is similar to the method presented in \cite{Sartipizadeh2018} with the exception of  that our method requires an exact convex hull rather than an approximate convex hull. The proposed method for sample reduction works for any scenario program that is affine in the random variable $\noise(\uncertaintyelement)$. \qed 
\end{remark}

\subsection{Spatial indexing for intersection search}\label{subsec:spatial_indexing_pwa}
Constructing Problem \eqref{eq:barrier_sp_opt_pro_finite} efficiently is also a non-trivial problem due to memory limits.
In fact, a na\"{i}ve approach to construct the problem is to iterate over all triplets $(i, j, \uncertaintyelement)$ in $I_\safeset \times I \times \sampleset$, check if $Q_{ij}(\uncertaintyelement) \neq \emptyset$, and add a set of constraints if the test is positive. This approach is only tractable for smaller problems as it has  time complexity $\mathcal{O}(\lvert \regionset \rvert^3)$. 
To reduce the complexity, we can exploit methods from database theory; namely spatial indexing, which is the structuring and querying of spatially distributed data, such as maps, with improved computational complexity \cite{10.1145/971697.602266}. 

To apply spatial indexing to our setting, we must first establish the data and query. It holds that $Q_{ij}(\uncertaintyelement) \neq \emptyset$ only if $\im_i(\region_i, \uncertaintyelement) \cap \region_j \neq \emptyset.$
Hence, if we search for regions $\region_j$ that intersect with the image $\im_i(\region_i, \uncertaintyelement)$, we find all pairs $(i, j)$ such that $Q_{ij}(\uncertaintyelement) \neq \emptyset$. For spatial indexing, we focus on R-trees as it is a well-studied and widely available method \cite{10.1145/971697.602266}. The idea is to structure the data in a tree structure and at each node store a \gls{mbr} for the nodes below. Then, querying the tree for the intersection with another region proceeds recursively down the tree, where it is only necessary to search down a branch if the \gls{mbr} of the branch and the query intersect, which is an inexpensive operation by the separating hyperplane theorem \cite{BV:04}. 
Figure \ref{fig:r_tree_spatial_indexing_state_space} shows an example of an R-tree for a partitioned state space.
This method improves complexity by efficiently searching for relevant triplets $(i, j, \uncertaintyelement)$.

\begin{figure}
    \centering
    \includegraphics[width=0.8\linewidth]{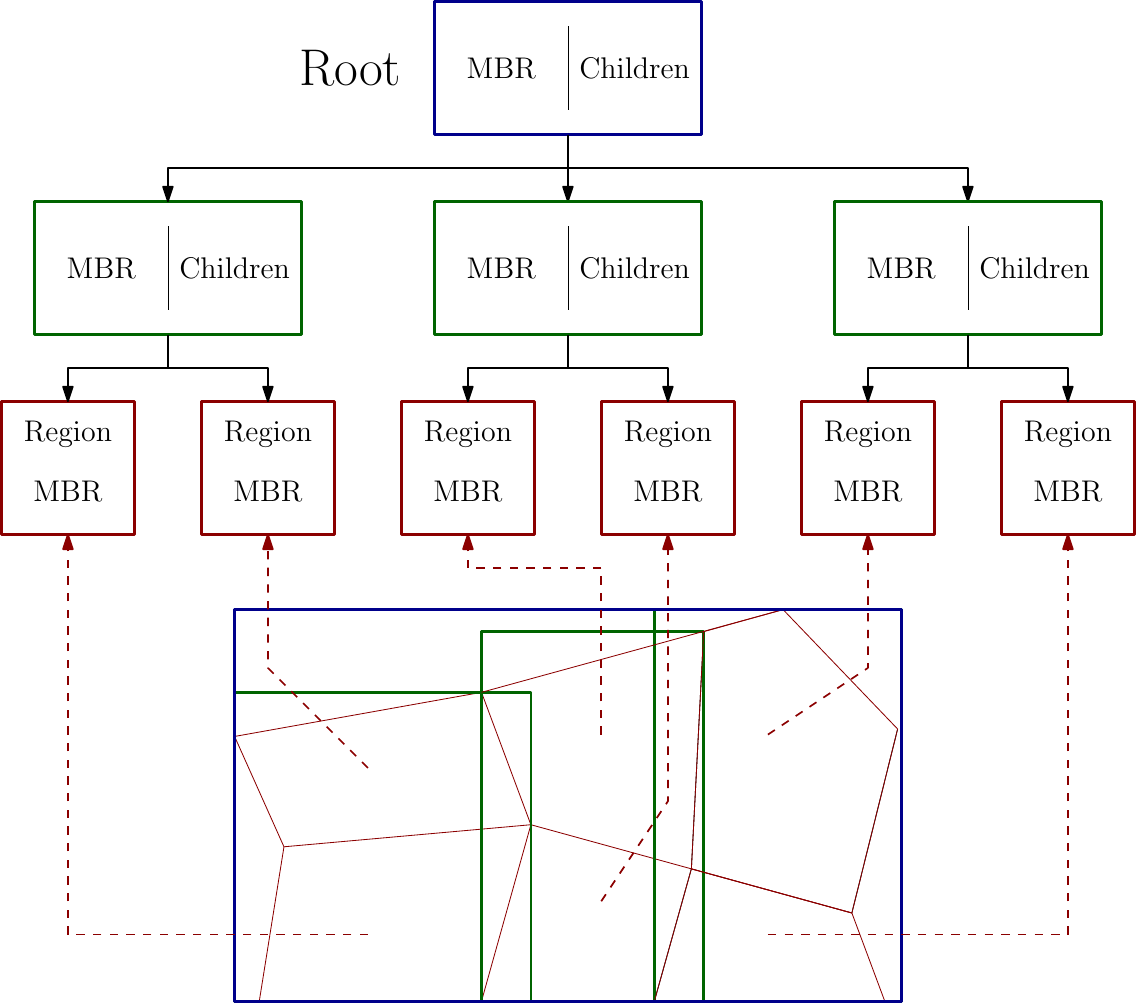}
    \caption{An example of an R-tree applied to a partitioned state space to allow efficient search for regions intersecting with $\im_i(\region_i, \uncertaintyelement)$. The rectangles are the \acrlong{mbr} for each node in the tree.}
    \label{fig:r_tree_spatial_indexing_state_space}
\end{figure}

In summary, to use the framework to compute data-driven safety certificates for non-linear systems: let the nominal dynamics $f$, initial and safe set $\initialset$, $\safeset$, a horizon $T$, and a dataset of samples $\sampleset$ be given. Then start by abstracting the non-linear dynamics $f$ to uncertain \gls{pwa} dynamics $\hat{f}$ using \gls{lbp} techniques. Compute the vertices $\xoverline{\sampleset}$ of the convex hull of $\sampleset$ and discard all interior points. For each region $\region_i$, find, using an R-tree, regions that intersect with the image of the dynamics $\im_i(\region_i, \uncertaintyelement)$ and add constraints accordingly. Solve the \gls{lp} problem \eqref{eq:barrier_sp_opt_pro_finite}, then the solution $z^\star(\xoverline{\sampleset}) = (\cmartingale^\star, \gamma^\star, \theta^\star)$ is a safety certificate $\zeta(\safeset, T) \geq 1 - (\gamma^\star + \cmartingale^\star T) $ with confidence $1 - \beta$ where $\beta$ is defined as in Prop. \ref{prop:scenario_approach_theory}.

\section{Experiments}
\label{sec:experiments}

To support the theoretical results and investigate the efficacy of our approach, we implemented our framework in Julia\footnote{Code is available at \url{https://github.com/DAI-Lab-HERALD/scenario-barrier} under GNU GPLv3 license.} and performed an empirical analysis on various benchmarks. The experiments have been conducted on a computer with an Intel i7-6700k CPU, Nvidia GTX1060 6GB GPU, and 16GB RAM, running Linux 5.10.211-1-MANJARO.
We start by describing the benchmarks followed by the results.
For a comparision with state-of-the-art, we consider the \acrfull{saa} method proposed in \cite{SALAMATI20217}, \cite{Salamati2023}. As this \gls{sbf} synthesis method has been developed specifically for linear or polynomial systems, namely through \gls{sos} optimisation, in order to provide a general baseline, in the case of non-polynomial and/or uncertain systems, we combine it with the method proposed in \cite{mazouzsafety2022}, to find a valid \gls{sbf} in the general case.

\subsection{Benchmarks}
The simplest system considered is an uncertain 1D linear system $x(k + 1) = x(k) + b(\alpha) + \noise$ where $b(\alpha) = -0.05 + 0.1\alpha$, i.e. the uncertainty is in $b(\alpha)$ with uncertainty variable $\alpha \in [0, 1]$. Starting in a set around the origin $\initialset := \{ \lvert x \vert \leq 0.5 \}$, the goal is  to stay within a larger set $\safeset := \{ \lvert x \vert \leq 2.5 \}$ for a horizon $T = 10$. The distribution of the noise is a zero-mean normal distribution with standard deviation $0.01$.

We also consider a 2D system from \cite{Badings2022} representing the longitudinal dynamics of a drone. The coordinates $x_1, x_2$ are the position and velocity, respectively, and the system has the following dynamics
\[
    \setlength\arraycolsep{6pt}
    x(k + 1) = \begin{pmatrix}
        1 & \tau \\ 0 & 1 - \frac{0.1\tau}{m} 
    \end{pmatrix}x(k) + \eta, 
\]
where $m \in [0.75, 1.25]$ and $\eta$ has a zero-mean normal distribution with diagonal covariance of $[0.01 \; 0.01]$. The variable $\tau$ represents the discretisation step, which is set to $\tau = 1.0$. As with the 1D linear system, we certify safety for a horizon $T = 10$.

The third benchmark represents a model of a vehicle travelling down a straight road when it experiences an (uncertain) gust of wind.
The coordinates $x_1, x_2$ represent, respectively, the longitudinal and lateral position of the vehicle. Similar to the drone system, $\tau$ represents the discretisation step. The goal is certify the probability of staying on the road $\safeset := \{ \lvert x_1 \vert \leq 2.5 \}$ for a horizon $T = 10$, when the system evolves according to the following dynamics
\[
    \setlength\arraycolsep{6pt}
    x(k + 1) = \begin{pmatrix}
        1 & 0 \\ 0 & 0.95
    \end{pmatrix}x(k) + \begin{pmatrix}
        \frac{50}{3.6} \cdot \tau\\
        \frac{1}{2} a_{lat} \cdot \tau^2
    \end{pmatrix} + \eta, 
\]
where $\tau = 1$, and $a_{lat} = 0$ for regions where $x_1 \leq 80$ or $x_1 \geq 120$, and $a_{lat} \in [0.0913, 0.364]$ for regions where $x_1 \leq 80$ or $x_1 \geq 120$. $\eta$ has a zero-mean normal distribution with diagonal covariance of $[0.01 \; 0.01]$.

While the previous models were linear, we also considered non-linear models. In particular, we consider \acrfullpl{nndm} with 1 and 2 hidden layers of 64 neurons each modelling a pendulum taken from \cite{mazouzsafety2022}.
Finally, the last benchmark is the 3D model Dubin's car from \cite{Mathiesen2013} for a time horizon $T = 10$. Dubin's car is a unicycle model where the state is $(x, y, \phi)$ with $\phi$ being the heading of the vehicle. We consider a grid-based partitioning of 10 segments along each axis, i.e. 1000 regions. The noise is only applied to the last dimension and has a normal distribution with mean of $60 \cdot \frac{\pi}{180} \approx 1.053$ and standard deviation $0.1$.

\subsection{Results}
Table \ref{tab:results_table} shows a list of results across all benchmarks. Both the safety probability and the computation time are reported as the mean over the 100 trials and for all cases the number of samples is selected to ensure a confidence $1 - \beta = 1 - 10^{-9}$.  From Table \ref{tab:results_table} we observe that, depending on the system, the method can certify safety to $> 99\%$ with high confidence, e.g. $99.5\%$ certified safety for the \gls{nndm} model of a pendulum with 2 layers and 64 neurons.
This certification requires relatively few regions of 10-30 segments per axis. 
Comparing the \gls{nndm} pendulum model with 2 and 3 layers (1 and 2 hidden layers, respectively), the complexity of the nominal dynamics impacts both computation time and certifiable safety, e.g. $99.5\%$ safety probability in $45.0 \mathrm{s}$ vs $97.6\%$ safety probability in $78.5 \mathrm{s}$ for $480$ regions, 2 and 3 layers respectively. This behavior can be explained by \gls{lbp} computing wider uncertain affine transformations lead to more non-empty $Q_{ij}(\omega)$.

Remarkably, the linearity of the underlying system has little impact on the certifiable safety.
This is observed in that both the 1D linear and drone systems exhibit uncertain linear behavior, yet the 1D linear system is certifiable to $50.8\%$ safety while the drone is certifiable to $99.5\%$.
Furthermore, the largest system considered, Dubin's car, which includes trigonometric functions, safety is certified to $99.9\%$.

\begin{table}
    \aboverulesep=0ex
    \belowrulesep=0ex
    \renewcommand{\arraystretch}{1.3}
    \centering
    \caption{Certified safety and computation time using the method explained in Sections \ref{sec:chance-constrained-apprx}-\ref{sec:computational_methods_and_efficiency}. Results are reported as the mean over 100 iterations for each case study. $n$ is the dimensionality of the system and $\numberregions$ is the number of regions. $\zeta(\safeset, \timehorizon)$ is the certified level of safety for $\beta = 10^{-9}$ where $1 - \beta$ is the level of confidence.}
    \label{tab:results_table}
    \vspace{0.2em}
    {\small \begin{tabularx}{\linewidth}{X|c|c|cc}
    \toprule
       System  & $n$ & $\numberregions$ & $\zeta(\safeset, \timehorizon)$ & Time (s)\\\midrule\midrule
         Linear & 1 & 27 & $0.508 $ & $0.239$\\\midrule
         Drone &  2 & 37 & $0.995$ & $41.8$ \\\midrule
         \multirow{4}{*}{Vehicle} & \multirow{4}{*}{2} & 18 & $0.607$ & $0.716$ \\
         & & 42 & $0.709$ & $1.86$ \\
         & & 54 & $0.827$ & $2.20$\\
         & & 150 & $0.995$ & $6.84$\\\midrule
         Pendulum (\gls{nndm}) & \multirow{3}{*}{2} & 120 & $0.344$ & $7.26$ \\
         - 2 layers &              & 240 & $0.756$ & $17.6$ \\
         - 64 neurons     &              & 480 & $0.995$ & $45.0$ \\\midrule
         Pendulum (\gls{nndm}) & \multirow{3}{*}{2} & 120 & $0.254$ & $14.3$ \\
         - 3 layers &             & 240 & $0.374$ & $36.3$ \\
         - 64 neurons      &             & 480 & $0.976$ & $78.5$ \\\midrule
        Dubin's car & 3 & 1000 & $0.999$ & $383$ \\
        \bottomrule
    \end{tabularx}}
\end{table}


\begin{table}
    \aboverulesep=0ex
    \belowrulesep=0ex
    \renewcommand{\arraystretch}{1.3}
    \centering
    \caption{Comparison of our method against \acrfull{saa} combined with the method presented in \cite{mazouzsafety2022}, to synthesise \glspl{sbf} in a data-driven fashion for non-polynomial and uncertain systems. Results are reported as the mean over 100 iterations for each case study. $ 1 - \beta$ is the confidence in the certificate and $\zeta(\safeset, \timehorizon)$ is the certified level of safety. OOM means the certification procedure crashed with an out-of-memory error.}
    \label{tab:results_table_saa}
    \vspace{0.2em}
    {\small \begin{tabularx}{\linewidth}{X|l|ccc}
    \toprule
       System  & Method & $\beta$ & $\zeta(\safeset, \timehorizon)$ & Time (s)\\\midrule\midrule
         \multirow{6}{*}{Linear} &      & $10^{-2}$ & $0.514$ & $0.219$\\
                                 & Ours & $10^{-3}$ & $0.513$ & $0.226$\\
                                 &      & $10^{-4}$ & $0.512$ & $0.234$\\\cmidrule{2-5}
                                 &      & $10^{-2}$ & $0.506$ & $1.91$\\
                                 & SAA  & $10^{-3}$ & $0.506$ & $24.9$\\
                                 & & $10^{-4}$ & - & OOM \\\midrule
                      Pendulum &      & $10^{-2}$ & $0.995$ & $42.0$ \\
         (\gls{nndm})     & Ours & $10^{-3}$ & $0.995$ & $42.0$ \\
         - 2 layers    &      & $10^{-4}$ & $0.995$ & $42.3$ \\\cmidrule{2-5}
         - 64 neurons  &      & $10^{-2}$ & $0.903$ & $16.5$ \\
         - 480 regions & SAA  & $10^{-3}$ & $0.903$ & $47.3$ \\
                       &      & $10^{-4}$ & - & OOM \\
        \bottomrule
    \end{tabularx}}
\end{table}

To compare against state-of-the-art, we report in Table \ref{tab:results_table_saa} the certified safety by using our method and using \gls{saa}, also as the mean over 100 trials. 
Because of the greater sample complexity of \gls{saa} having $\beta=10^{-9}$ is not feasible. Hence, we compare for multiple, higher values of $\beta$ to both make it tractable and find the limits of \gls{saa}.
From the table, we observe that for our method certifying for orders of magnitude larger confidence ($10^{-2}$ to $10^{-4}$) negligibly increases the computation time ($7\%$ for the 1D linear system and $0.7\%$ for the pendulum model) and achieves similar levels of certified safety.
In contrast, for \gls{saa}, going from $10^{-2}$ to $10^{-3}$ increases the computation time between 2.8x and 13x. Furthermore, for a confidence with $\beta = 10^{-4}$, the amount of memory required exceeds the 16GB available.
The achieved level of certified safety is also marginally better with our proposed method ($0.512$ vs $0.506$ for the 1D linear system and $0.995$ vs $0.903$ for the pendulum model).

\section{Conclusion}
\label{sec:discussion}
We have presented a novel data-driven method to synthesise \acrlong{pwa} \acrfull{sbf} based on a novel inner-approximation, which relies on the scenario theory. Our approach employs \acrlong{lbp} and stochastic approximations to guarantee that the search for a barrier reduces to solving a \acrlong{lp} problem, which can be solved efficiently using the convex hull over the noise samples and spatial indexing for faster searching. 
As with any method, ours comes with limitations. In fact, in our approach we assume that the noise is additive and that we have \gls{iid} full measurements of the state of the system. 
In particular, we rely on the additivity of the noise to achieve efficient algorithms in practice. How to extend our approach to non-additive noise represents an important open question. 
Another direction for future research to improve scalability is to consider more complex piece-wise templates for parameterising a \gls{sbf}, e.g. piece-wise quadratic barrier candidates. Using more complex templates may require fewer pieces and thus improve synthesis performance. For more complex templates, one challenge is that strong duality, on which we have relied heavily, does not necessarily hold. 
Another direction of future work is to treat the safe control synthesis problem within the proposed framework, where a challenge is that the optimisation problem easily becomes bilinear.

\bibliographystyle{plain}
\bibliography{library.bib,licio-ref.bib}

\begin{thebibliography}{10}

\bibitem{Aba:17}
Alessandro Abate.
\newblock {Formal verification of complex systems: Model-based and data-driven
  methods}.
\newblock In {\em MEMOCODE}, 2017.

\bibitem{Abate2008}
Alessandro Abate, Maria Prandini, John Lygeros, and Shankar Sastry.
\newblock {Probabilistic reachability and safety for controlled discrete time
  stochastic hybrid systems}.
\newblock {\em Automatica}, 2008.

\bibitem{Badings2022}
Thom Badings, Licio Romao, Alessandro Abate, and Nils Jansen.
\newblock Probabilities are not enough: Formal controller synthesis for
  stochastic dynamical models with epistemic uncertainty.
\newblock In {\em AAAI}, 2023.

\bibitem{badings2023robust}
Thom Badings, Licio Romao, Alessandro Abate, David Parker, Hasan~A Poonawala,
  Marielle Stoelinga, and Nils Jansen.
\newblock Robust control for dynamical systems with non-gaussian noise via
  formal abstractions.
\newblock {\em JAIR}, 2023.

\bibitem{bertsekas2004stochastic}
Dimitri~P Bertsekas and Steven~E Shreve.
\newblock {\em Stochastic optimal control: the discrete-time case}.
\newblock Athena Scientific, 2004.

\bibitem{BV:04}
Stephen Boyd and Lieven Vandenberghe.
\newblock {\em {Convex Optimization}}.
\newblock Cambride University Press, 2004.

\bibitem{CG:08}
Marco Campi and Simone Garatti.
\newblock {The exact feasibility of randomized solutions of uncertain convex
  programs}.
\newblock {\em SIAM Journal on Optimization}, 2008.

\bibitem{cauchi2019efficiency}
Nathalie Cauchi, Luca Laurenti, Morteza Lahijanian, Alessandro Abate, Marta
  Kwiatkowska, and Luca Cardelli.
\newblock Efficiency through uncertainty: Scalable formal synthesis for
  stochastic hybrid systems.
\newblock In {\em HSCC}, 2019.

\bibitem{CLARKE198152}
Frank~H Clarke.
\newblock Generalized gradients of lipschitz functionals.
\newblock {\em Advances in Mathematics}, 1981.

\bibitem{cosner2024bounding}
Ryan~K. Cosner, Preston Culbertson, and Aaron~D. Ames.
\newblock Bounding stochastic safety: Leveraging freedman's inequality with
  discrete-time control barrier functions, 2024.

\bibitem{cosner2023robust}
Ryan~K Cosner, Preston Culbertson, Andrew~J Taylor, and Aaron~D Ames.
\newblock Robust safety under stochastic uncertainty with discrete-time control
  barrier functions, 2023.

\bibitem{10.5555/2540128.2540252}
Giuseppe De~Giacomo and Moshe~Y. Vardi.
\newblock Linear temporal logic and linear dynamic logic on finite traces.
\newblock In {\em IJCAI}, 2013.

\bibitem{gearhart2013comparison}
Jared~Lee Gearhart, Kristin~Lynn Adair, Justin~David Durfee, Katherine~A Jones,
  Nathaniel Martin, and Richard~Joseph Detry.
\newblock Comparison of open-source linear programming solvers.
\newblock Technical report, Sandia National Lab, 2013.

\bibitem{GP:07}
Antoine Girard and George Pappas.
\newblock {Approximation metrics for discrete and continuous systems}.
\newblock {\em IEEE TAC}, 2007.

\bibitem{Gracia2022}
Ibon Gracia, Dimitris Boskos, Luca Laurenti, and Manuel Mazo.
\newblock Distributionally robust strategy synthesis for switched stochastic
  systems.
\newblock {\em HSCC}, 2023.

\bibitem{10.1145/971697.602266}
Antonin Guttman.
\newblock R-trees: A dynamic index structure for spatial searching.
\newblock {\em SIGMOD Record}, 1984.

\bibitem{9144391}
Sofie Haesaert and Sadegh Soudjani.
\newblock Robust dynamic programming for temporal logic control of stochastic
  systems.
\newblock {\em IEEE TAC}, 2021.

\bibitem{Huangfu2017}
Q.~Huangfu and J.~A.~J. Hall.
\newblock Parallelizing the dual revised simplex method.
\newblock {\em Mathematical Programming Computation}, 2017.

\bibitem{Jackson2020}
John Jackson, Luca Laurenti, Eric Frew, and Morteza Lahijanian.
\newblock {Safety Verification of Unknown Dynamical Systems via Gaussian
  Process Regression}.
\newblock {\em CDC}, 2020.

\bibitem{Jagtap2020}
Pushpak Jagtap, Sadegh Soudjani, and Majid Zamani.
\newblock {Formal Synthesis of Stochastic Systems via Control Barrier
  Certificates}.
\newblock {\em IEEE TAC}, 2020.

\bibitem{kushner1967stochastic}
Harold~J Kushner.
\newblock Stochastic stability and control.
\newblock Technical report, Brown Univ Providence RI, 1967.

\bibitem{laurenti2023unifying}
Luca Laurenti and Morteza Lahijanian.
\newblock Unifying safety approaches for stochastic systems: From barrier
  functions to uncertain abstractions via dynamic programming, 2023.

\bibitem{Lavaei2021}
Abolfazl Lavaei, Sadegh Soudjani, Alessandro Abate, and Majid Zamani.
\newblock Automated verification and synthesis of stochastic hybrid systems: A
  survey.
\newblock {\em Automatica}, 2022.

\bibitem{Abolfazl2023}
Abolfazl Lavaei, Sadegh Soudjani, Emilio Frazzoli, and Majid Zamani.
\newblock Constructing {MDP} abstractions using data with formal guarantees.
\newblock {\em IEEE Control Systems Letters}, 2023.

\bibitem{legat2021polyhedra}
Benoît Legat, Robin Deits, Gustavo Goretkin, Twan Koolen, Joey Huchette,
  Daisuke Oyama, and Marcelo Forets.
\newblock Juliapolyhedra/polyhedra.jl: v0.6.16, June 2021.

\bibitem{Livingston2012}
Scott~C. Livingston, Richard~M. Murray, and Joel~W. Burdick.
\newblock Backtracking temporal logic synthesis for uncertain environments.
\newblock In {\em ICRA}, 2012.

\bibitem{Mathiesen2013}
Frederik~Baymler Mathiesen, Simeon~C. Calvert, and Luca Laurenti.
\newblock Safety certification for stochastic systems via neural barrier
  functions.
\newblock {\em IEEE Control Systems Letters}, 2023.

\bibitem{mathiesen2023inner}
Frederik~Baymler Mathiesen, Licio Romao, Simeon~C. Calvert, Alessandro Abate,
  and Luca Laurenti.
\newblock Inner approximations of stochastic programs for data-driven
  stochastic barrier function design.
\newblock In {\em CDC}, 2023.

\bibitem{mazouz2024piecewise}
Rayan Mazouz, Frederik~Baymler Mathiesen, Luca Laurenti, and Morteza
  Lahijanian.
\newblock Piecewise stochastic barrier functions, 2024.

\bibitem{mazouzsafety2022}
Rayan Mazouz, Karan Muvvala, Akash~Ratheesh Babu, Luca Laurenti, and Morteza
  Lahijanian.
\newblock Safety guarantees for neural network dynamic systems via stochastic
  barrier functions.
\newblock In {\em NeurIPS}, 2022.

\bibitem{mcallister2020stochastic}
Robert~D McAllister and James~B Rawlings.
\newblock Stochastic lyapunov functions and asymptotic stability in
  probability.
\newblock Technical report, Texas - Wisconsin - California Control Consortium,
  2020.

\bibitem{PJP:07}
Stephen Prajna, Ali Jadbabaie, and George Pappas.
\newblock {A framework for worst-case and stochastic safety verification using
  barrier certificates}.
\newblock {\em IEEE TAC}, 2007.

\bibitem{rahimian2019distributionally}
Hamed Rahimian and Sanjay Mehrotra.
\newblock Frameworks and results in distributionally robust optimization.
\newblock {\em Open Journal of Mathematical Optimization}, 2022.

\bibitem{SALAMATI20217}
Ali Salamati, Abolfazl Lavaei, Sadegh Soudjani, and Majid Zamani.
\newblock Data-driven safety verification of stochastic systems via barrier
  certificates.
\newblock {\em IFAC-PapersOnLine}, 2021.

\bibitem{Salamati2023}
Ali Salamati and Majid Zamani.
\newblock Safety verification of stochastic systems: A repetitive scenario
  approach.
\newblock {\em IEEE Control Systems Letters}, 2023.

\bibitem{Santoyo2021}
Cesar Santoyo, Maxence Dutreix, and Samuel Coogan.
\newblock A barrier function approach to finite-time stochastic system
  verification and control.
\newblock {\em Automatica}, 2021.

\bibitem{Sartipizadeh2018}
Hossein Sartipizadeh and Behcet Acikmese.
\newblock Approximate convex hull based sample truncation for scenario approach
  to chance constrained trajectory optimization.
\newblock In {\em ACC}, 2018.

\bibitem{schon2024datadriven}
Oliver Schön, Zhengang Zhong, and Sadegh Soudjani.
\newblock Data-driven distributionally robust safety verification using barrier
  certificates and conditional mean embeddings, 2024.

\bibitem{Shalev-Shwartz2017}
Shai Shalev-Shwartz, Shaked Shammah, and Amnon Shashua.
\newblock On a formal model of safe and scalable self-driving cars, 2018.

\bibitem{steinhardt2012finite}
Jacob Steinhardt and Russ Tedrake.
\newblock Finite-time regional verification of stochastic non-linear systems.
\newblock {\em The International Journal of Robotics Research}, 2012.

\bibitem{tao2013introduction}
T.~Tao.
\newblock {\em An Introduction to Measure Theory}.
\newblock American Mathematical Society, 2013.

\bibitem{Huijgevoort2023b}
Birgit Van~Huijgevoort, Oliver Sch\"{o}n, Sadegh Soudjani, and Sofie Haesaert.
\newblock Syscore: Synthesis via stochastic coupling relations.
\newblock In {\em HSCC}, 2023.

\bibitem{kaidi2020automatic}
Kaidi Xu, Zhouxing Shi, Huan Zhang, Yihan Wang, Kai-Wei Chang, Minlie Huang,
  Bhavya Kailkhura, Xue Lin, and Cho-Jui Hsieh.
\newblock Automatic perturbation analysis for scalable certified robustness and
  beyond.
\newblock In {\em NeurIPS}, 2020.

\bibitem{vzikelic2023learning}
{\DJ}or{\dj}e {\v{Z}}ikeli{\'c}, Mathias Lechner, Thomas~A Henzinger, and
  Krishnendu Chatterjee.
\newblock Learning control policies for stochastic systems with reach-avoid
  guarantees.
\newblock In {\em Proceedings of the AAAI Conference on Artificial
  Intelligence}, volume~37, pages 11926--11935, 2023.

\end{thebibliography}

\cleardoublepage
\appendix

\section{Proof and auxiliary results}
\label{sec:appendix_proofs}

\subsection{Proof of Proposition \ref{prop:main_result_robust_LP}}
\label{subsec:appendix_proof_robust_lp_duality}

\begin{proof}
    Let the sets $\mathcal{Z}$ and 
    \begin{align}
        \mathcal{Z} &= \{ z : (Az + a)^\top x \leq Bz + b, \text{ for all } x \in \polyhedron \},\\
        \mathcal{Z}'' &= \{ (z,\dualvariable) : h^\top \dualvariable \leq Bz + b,~ H^\top \dualvariable = Az + a,~\dualvariable \geq 0 \},
    \end{align}
    be the feasible sets of \eqref{eq:robust_LP} and \eqref{eq:dual_robust_LP_formulation}, respectively. Our goal is to show that $\mathcal{Z} = \mathrm{proj}_z(\mathcal{Z}'') = \mathcal{Z}'$, where 
    \[
        \mathcal{Z}' = \{ z : \exists \dualvariable \in \mathbb{R}^p_{\geq 0}, ~h^\top \dualvariable \leq Bz + b,~ H^\top \dualvariable = (Az + a) \}.
    \] is the projection of the set $\mathcal{Z}''$ onto its first $d$ coordinates. At the core of this result is the strong duality \cite[Section 5.2.1]{BV:04} between the linear programs
    \[
    \begin{aligned}
        \maximise_x & \quad (Az + a)^\top x \nonumber  \\
        \subjectto & \quad H x \leq h, 
    \end{aligned} \qquad
    \begin{aligned}
         \min_\dualvariable &\quad \dualvariable^\top h \nonumber\\
         \subjectto &\quad H^\top \dualvariable = (Az + a), \quad \dualvariable \geq 0,
    \end{aligned}
    \]
    which states that for any element $x_{\star}$ in the optimal set of the maximisation problem there exist a $\dualvariable_{\star}$ in the optimal set of the minimisation problem such that $(Az + a)^\top x_{\star} = \dualvariable_{\star}^\top h$ and $\dualvariable_{\star}^\top (H^\top x_{\star} - h) = 0.$ Vice-versa, for all $\dualvariable_{\star}$ in the optimal set of the minimisation problem there exists a $x_{\star}$ in the optimal set of the maximisation problem such that similar conclusions hold.

    Pick any element $\bar{z} \in \mathcal{Z}$. Let $\bar{x}$ be such that $(A\bar{z} + a)^\top \bar{x} = \sup_{x \in \polyhedron} a(\bar{z})^\top x$. Hence, by strong duality, there exists a $\bar{\dualvariable} \geq 0$ with $H^\top \bar{\dualvariable} = (A\bar{z} + a)$ such that $\bar{\dualvariable}^\top h = (A\bar{z} + a)^\top \bar{x} \leq B\bar{z} + b$. In other words, there is a $\bar{\dualvariable}$ such that $(\bar{z},\bar{\dualvariable}) \in \mathcal{Z}''$, which implies $\mathcal{Z} \subseteq \mathcal{Z}'$. 

    For the other direction, consider a tuple $(\bar{z},\bar{\dualvariable}) \in \mathcal{Z}''$. For this given $\bar{z}$, pick 
    \[
        \dualvariable_{\star}(\bar{z}) \in \argmin_{H^\top \dualvariable = A\bar{z} + a,~\dualvariable \geq 0} \dualvariable^\top h,
    \]
    we notice that we have that $\bar{z} \in \mathcal{Z}'$ and
    \[
    \sup_{x \in \polyhedron} (A\bar{z} + a)^\top x = \dualvariable_{\star}(\bar{z})^\top h \leq \bar{\dualvariable}^\top h \leq B\bar{z} + b.
    \]
    The right-most inequality follows from feasibility of $(\bar{z},\bar{\dualvariable})$, the middle inequality by our choice of $\dualvariable_{\star}(\bar{z})$, the left-most equality by strong duality of linear programming. Then we conclude that $\bar{z} \in \mathcal{Z}$, thus concluding the proof of the proposition.
\end{proof}

\subsection{Proof of Proposition \ref{prop:continuous_function_as_uncertain_pwa}}
\label{subsec:appendix_proof_existence_linear_bounds}

\begin{proof}
    Fix a dimension $j \in \{0, \ldots, n\}$ and a convex region $P_i$. Then due to the mean value theorem for generalised gradients \cite{CLARKE198152}, there exists two hyperplanes $(\underline{A_i}x + \underline{b_i})_j$, $(\xoverline{A_i}x + \xoverline{b_i})_j$ such that it holds
    \[
        (\underline{A_i}x + \underline{b_i})_j \leq f(x)_j \leq (\xoverline{A_i}x + \xoverline{b_i})_j
    \]
    for all $x \in \region_i$. Combining all dimensions, it holds that 
    \[
        \underline{A_i}x + \underline{b_i} \leq f(x) \leq (\xoverline{A_i}x + \xoverline{b_i})
    \]
    for all $x \in \region_i$ for each region $\region_i$. As a result, by the definition of $\hat{f}$ it holds that $f(x) \in \{ \hat{f}(x, \alpha)  : \alpha \in [0, 1] \}$ for all $x \in \statespace$ concluding the proof.
\end{proof}

\subsection{Measurability issue of Theorem \ref{theorem:barrier-ccp}}
\label{subsec:appendix_proof_measurability}

We need to show that set 
\begin{equation}
    \begin{aligned}
        E = \{\uncertaintyelement \in \uncertaintyspace:\,& B(f(x) + \noise(\uncertaintyelement)) + \buffervar \leq B(x) + \cmartingale, \\ & \text{for all } x \in X_s\}, 
    \end{aligned}
\end{equation}
is Borel measurable. First of all, notice that due to the fact that $\{\region_1,\ldots,\region_\numberregions\}$ is a finite partition of $X_s$, we have that
\begin{equation}
E = \bigcap_{\region_i \in X_s}\{  \uncertaintyelement \in \uncertaintyspace: \sup_{x\in \region_i} B(f(x)+\noise(\uncertaintyelement)) -B_i(x) \leq \cmartingale - \buffervar\}.
\end{equation}
As finite intersections of measurable sets are still measurable and a measurable function maps measurable sets into measurable sets, it is enough to show that each set 
$$E_i= \{  \noise \in \mathbb{R}^n : \sup_{x\in \region_i} B(f(x)+\noise) -B_i(x) \leq \cmartingale - \buffervar\} $$
is measurable. 
In order to do that note that when restricted to $\region_i$, $B_i$ is a linear function, while by construction $B(x)$ is upper semi-continuous. Furthermore, as composition of an upper semi-continuous function with a continuous one is still upper semi-continuous, we have that both $B(f(x)+\noise)$ and $B(f(x)+\noise) -B_i(x)$ are upper semi-continuous functions.
Consequently, by Proposition 7.32 in \cite{bertsekas2004stochastic} we have that 
$g_i(\noise)= \sup_{x\in \region_i} B(f(x)+\noise) -B_i(x) $ is upper semi-continuous. As $g_i(\noise)$ is upper semi-continuous, hence Borel measurable, we have that set $E_i$ is Borel measurable, thus concluding the proof.

\subsection{Proof for Proposition \ref{prop:finite_cmartingale_constraints}}
\begin{proof}
    Start by fixing $\alpha \in [0, 1]$, the pair $(i, j) \in I_\safeset \times I$, and the noise sample $\uncertaintyelement \in \sampleset$. Then the constraint $B(\hat{f}(x, \alpha) + \noise(\uncertaintyelement), \theta) + \buffervar \leq B(x, \theta) + \cmartingale$ for all $x \in \safeset$ can be rewritten as
    \[
        B_j(A_i(\alpha)x + b_i(\alpha) + \noise(\uncertaintyelement), \theta) + \buffervar \leq B_i(x, \theta) + \cmartingale, 
    \]
    for all $x \in \xoverline{Q}_{ij}(\uncertaintyelement)$.
    Using Prop. \ref{prop:main_result_robust_LP}, we rewrite this constraint into a dual formulation so that we can solve it in a lifted space. Thus, they become the following
    \begin{align*}
        h_{ij\uncertaintyelement}^\top \lambda_{ij\uncertaintyelement} &\leq \barrierc_i - \barrierc_j - \barrierl_j^\top (b_i(\alpha) + \noise(\uncertaintyelement)) + c - \buffervar, \\
        H^\top_{ij\uncertaintyelement} \lambda_{ij\uncertaintyelement} &= {A_i(\alpha)}^{\top} \barrierl_j - \barrierl_i,
    \end{align*}
    where $\lambda_{ij\uncertaintyelement}$ is a non-negative dual variable.
    Since both constraints are affine in $\alpha$, they will hold for all $\alpha \in [0, 1]$ if and only if they hold for the vertices, that is, $\alpha \in \{0, 1\}$. We conclude the proof by observing that for each noise sample $\uncertaintyelement \in \sampleset$, the union of $\xoverline{Q}_{ij}(\uncertaintyelement)$ for all $(i, j) \in I_\safeset \times I$ is a superset of $\safeset$.
\end{proof}

\section{Algorithm for computing over-approximation of $Q_{ij}(\omega)$}
\label{sec:appendix_algorithm_bisection}

Our approach to computing an over-approximation of $Q_{ij}(\omega)$ is detailed in Alg. \ref{alg:bisection_region}, where we start with a hyperrectangle $\xoverline{Q}_{ij}(\uncertaintyelement) = \region_i = \{ x \in \mathbb{R}^m : l \leq x \leq u \}$, with $\leq$ interpreted element-wise. 
The idea to reduce the size of $\xoverline{Q}_{ij}(\uncertaintyelement)$ is to increase $l$ (Line \ref{lst:bisection_line_lower_start}-\ref{lst:bisection_line_lower_end}) and decrease $u$ (Line \ref{lst:bisection_line_upper_start}-\ref{lst:bisection_line_upper_end}) while maintaining the over-approximation of $Q_{ij}(\uncertaintyelement)$.
Treating the axes sequentially (Line \ref{lst:bisection_line_axis_loop}), denoting the current axis by $k$, we use bisection first to find the largest $l_k$ and then the smallest $u_k$ such that 
$\xoverline{Q}_{ij}(\uncertaintyelement) \cap \{x : l_k \leq x_k \leq u_k\}$ is an over-approximation of $Q_{ij}(\uncertaintyelement)$. Then we can replace $\xoverline{Q}_{ij}(\uncertaintyelement)$ with $\xoverline{Q}_{ij}(\uncertaintyelement) \cap \{x : l_k \leq x_k \leq u_k\}$ for the next axis (Line \ref{lst:bisection_line_axis_replace}).

To perform the bisection for increasing $l_k$, we compute the midpoint $c_k^l$ between $l_k$ and $u_k$ (Line \ref{lst:bisection_lower_midpoint}), denoting them $l^l_k $ and $u^l_k$ respectively, and test if $ Q_{ij}(\uncertaintyelement) \subset \xoverline{Q}_{ij}(\uncertaintyelement) \cap \{x : c^l_k \leq x_k\}$. If true, then we may let $l^l_k = c_k^l$ (Line \ref{lst:bisection_line_select_lower}), and if not, let $u^l_k = c_k^l$ (Line \ref{lst:bisection_line_select_upper}). This procedure repeats for a fixed number of iterations and is performed mutatis mutandis to decrease $u_k$.
The question remains how to check if $ Q_{ij}(\uncertaintyelement) \subset \xoverline{Q}_{ij}(\uncertaintyelement) \cap \{x : c^l_k \leq x_k\}$, since $Q_{ij}(\uncertaintyelement)$ is unknown.
To this end, recall that $Q_{ij}(\uncertaintyelement)$ is the subset of region $\region_{i}$ that under a realisation of the noise $\uncertaintyelement$ reaches region $\region_j$ in one time step. Thus, if the image of the other subregion $\im(\xoverline{Q}_{ij}(\uncertaintyelement) \cap \{x : x_k \leq c^l_k \}, \uncertaintyelement)$ under the realisation of the noise $\uncertaintyelement$ does not intersect $\region_{j}$ (Line \ref{lst:bisection_lower_intersect_check}), then it necessarily holds that $ Q_{ij}(\uncertaintyelement) \subset \xoverline{Q}_{ij}(\uncertaintyelement) \cap \{x : c^l_k \leq x_k\}$.

\begin{algorithm}
\caption{Bisection-based algorithm for computing a subset $\xoverline{Q}_{ij}(\uncertaintyelement)$ of region $\region_i$ as an over-approximation of $Q_{ij}(\uncertaintyelement)$.}\label{alg:bisection_region}
\begin{algorithmic}[1]
\Function{PolyPreimage}{$\region_i$, $\region_j$, $\hat{f}$, $\uncertaintyelement$, $t$}
\State $\xoverline{Q}_{ij}(\uncertaintyelement) \gets \region_i$
\For{$k \gets 1 $ to $m$} \Comment{For each axis} \label{lst:bisection_line_axis_loop}
    \State
    \State $l^l_k, u^l_k \gets l_k, u_k$ \Comment{Increase lower bound} \label{lst:bisection_line_lower_start}
    \For{$s \gets 1$ to $t$}
        \State $c^l_k \gets \frac{l^l_k + u^l_k}{2}$ \label{lst:bisection_lower_midpoint}
        \State $\xoverline{Q}_{ij}(\uncertaintyelement)' \gets \xoverline{Q}_{ij}(\uncertaintyelement) \cap \{ x : x_k \leq c^l_k \}$
        \If{$\im_i(\xoverline{Q}_{ij}(\uncertaintyelement)', \uncertaintyelement) \cap \region_j = \emptyset $} \label{lst:bisection_lower_intersect_check}
            \State $l^l_k \gets c^l_k$ \label{lst:bisection_line_select_lower}
        \Else
            \State $u^l_k \gets c^l_k$ \label{lst:bisection_line_select_upper}
        \EndIf
    \EndFor \label{lst:bisection_line_lower_end}
    \State
    \State $l^u_k, u^u_k \gets l_k, u_k$ \Comment{Decrease upper bound} \label{lst:bisection_line_upper_start}
    \For{$s \gets 1$ to $t$}
        \State $c^u_k \gets \frac{l^u_k + u^u_k}{2}$
        \State $\xoverline{Q}_{ij}(\uncertaintyelement)' \gets \xoverline{Q}_{ij}(\uncertaintyelement) \cap \{ x : x_k \geq c^u_k \}$
        \If{$\im_i(\xoverline{Q}_{ij}(\uncertaintyelement)', \uncertaintyelement) \cap \region_j = \emptyset $}
            \State $u^u_k \gets c^u_k$
        \Else
            \State $l^u_k \gets c^u_k$
        \EndIf
    \EndFor \label{lst:bisection_line_upper_end}
    \State
    \State $\xoverline{Q}_{ij}(\uncertaintyelement) \gets \xoverline{Q}_{ij}(\uncertaintyelement) \cap \{x : l^l_k \leq x_k \leq u^u_k \}$ \label{lst:bisection_line_axis_replace}
\EndFor
\State \textbf{return} $\xoverline{Q}_{ij}(\uncertaintyelement)$ \Comment{It holds that $Q_{ij}(\uncertaintyelement) \subset \xoverline{Q}_{ij}(\uncertaintyelement)$}
\EndFunction
\end{algorithmic}
\end{algorithm}

\end{document}